\documentclass[a4paper,UKenglish,cleveref, autoref, thm-restate]{lipics-v2021}

\hideLIPIcs  

\title{Determining the Outerthickness of Graphs Is NP-Hard}

\author{Pin-Hsian Lee}{National Taiwan University, Taiwan}{40940121S@ntnu.edu.tw}{https://orcid.org/0009-0009-2677-4614}{}
\author{Te-Cheng Liu}{Academia Sinica, Taiwan}{david1213@sinica.edu.tw}{https://orcid.org/0009-0002-3031-3238}{}
\author{Meng-Tsung Tsai}{Academia Sinica, Taiwan}{mttsai@iis.sinica.edu.tw}{https://orcid.org/0000-0002-2243-8666}{This research was supported in part by the National Science and Technology Council under
contract NSTC 114-2221-E-001-023.}

\authorrunning{P.-H. Lee, T.-C. Liu, and M.-T. Tsai} 

\Copyright{Pin-Hsian Lee, Te-Cheng Liu, and Meng-Tsung Tsai} 

\ccsdesc[100]{Theory of computation~Problems, reductions and completeness}

\keywords{outerthickness, outerplanar graphs, edge partition} 

\category{} 

\relatedversion{} 

\nolinenumbers 

\usepackage{tikz}
\usetikzlibrary{shapes, arrows, positioning, calc}
\usetikzlibrary{calc, positioning, shapes.geometric, decorations.pathmorphing, fit}
\usepackage{mathtools}
\usepackage{todonotes}
\usepackage{booktabs}

\hypersetup{colorlinks=true,
 linkcolor=red,
citecolor=green,
urlcolor=magenta
}
\definecolor{purple}{HTML}{1A1AB3}
\definecolor{blue}{HTML}{2F6FED}
\definecolor{red}{HTML}{D32F2F}
\definecolor{green}{HTML}{2E7D32}

\DeclareEmphSequence{\color{purple}\itshape}

\AtBeginEnvironment{algorithm}{%
  \DeclareEmphSequence{\itshape}
}
\AtEndEnvironment{algorithm}{%
  \DeclareEmphSequence{\color{purple}\itshape}
}

\AtBeginEnvironment{thebibliography}{%
  \DeclareEmphSequence{\itshape}
}
\AtEndEnvironment{thebibliography}{%
  \DeclareEmphSequence{\color{purple}\itshape}
}

\newcommand{\yes}{Y}
\newcommand{\no}{{\color{red}N}}
\newcommand{\othersNP}[1]{NP-hard~#1}
\newcommand{\ourNP}{{\color{blue} NP-hard}}

\usepackage{amsthm}
\theoremstyle{definition}
\newtheorem{problem}[theorem]{Problem}

\EventEditors{}
\EventNoEds{0}
\EventLongTitle{}
\EventShortTitle{ICALP}
\EventAcronym{}
\EventYear{}
\EventDate{}
\EventLocation{}
\EventLogo{}
\SeriesVolume{}
\ArticleNo{1}

\begin{document}

\maketitle

\begin{abstract}
We give a short, self-contained, and easily verifiable proof that determining the outerthickness of a
general graph is NP-hard. This resolves a long-standing open problem on the computational complexity
of outerthickness.

Moreover, our hardness result applies to a more general covering problem $P_{\mathcal{F}}$, defined as
follows. Fix a proper graph class $\mathcal{F}$ whose membership is decidable. Given an
undirected simple graph $G = (V, E)$ and an integer $k$, the task is to cover the edge set $E(G)$ by at most $k$ subsets
$E_1,\ldots,E_k$ such that each subgraph $(V(G),E_i)$ for $i \in [k]$ belongs to $\mathcal{F}$. Note that if
$\mathcal{F}$ is monotone (in particular, when $\mathcal{F}$ is the class of all outerplanar graphs),
any such cover can be converted into an edge partition by deleting overlaps; hence, in this case,
covering and partitioning are equivalent.

Our result shows that for every proper graph class $\mathcal{F}$ whose membership is decidable and that satisfies all of the following conditions:
(a)~$\mathcal{F}$ is closed under topological minors,
(b)~$\mathcal{F}$ is closed under $1$-sums, and
(c)~$\mathcal{F}$ contains a cycle of length $3$,
the problem $P_{\mathcal{F}}$ is NP-hard for every fixed integer $k\ge 3$. In particular:

\begin{itemize}
    \item For $\mathcal{F}$ equal to the class of all outerplanar graphs, our result settles the
    long-standing open problem on the complexity of determining outerthickness.
    \item For $\mathcal{F}$ equal to the class of all planar graphs, our result complements Mansfield's NP-hardness result (1983) for the thickness, which applies only to the case $k=2$.
\end{itemize}

It is also worth noting that each of the three conditions above is necessary. If $\mathcal{F}$ is the
class of all eulerian graphs, then condition~(a) fails. If $\mathcal{F}$ is the class of all pseudoforests,
then condition~(b) fails. If $\mathcal{F}$ is the class of all forests, then condition~(c) fails. For each
of these three classes $\mathcal{F}$, the problem $P_{\mathcal{F}}$ is solvable in polynomial time for every fixed integer 
$k\ge 3$, showing that none of the three conditions can be dropped.
\end{abstract}

\thispagestyle{empty}
\clearpage
\setcounter{page}{1}
\section{Introduction}\label{sec:intro}

Guy (1974)~\cite{Guy74} defined the \emph{outerthickness} of a graph $G$ as the minimum number of parts in an edge partition of $G$ such that each part induces an outerplanar subgraph; 
that is, a planar graph admitting an embedding in the plane in which every node lies on the outer face. Outerthickness is an analogue of thickness, except that each part is required to be outerplanar rather than merely planar. Although Mansfield (1983) proved that determining the thickness of a graph is NP-hard~\cite{Mansfield83}, the complexity of determining the outerthickness has remained unsettled~\cite{Poranen04,PoranenM05,Poranen08,BatraGPC10,MakinenP12,SuZ21,HlinenyM23,BalkoHMOVW24}. 

The algorithmic study of outerthickness is motivated in part by a conjecture of Chartrand et al. (1971), who proposed, among other things, that every planar graph admits an edge partition into two outerplanar graphs; equivalently, that every planar graph has outerthickness at most 2. Elmallah and Charles (1988) cited this conjecture and addressed it for 4-node-connected planar graphs by presenting an algorithm that produces such a partition into two outerplanar subgraphs. Although they did not explicitly analyze the running time, using the state of the art of the required subroutines at the time~\cite{HopcroftT74,Gouyou-Beauchamps82}, their algorithm can be implemented in $O(n^3)$ time, where $n$ denotes the number of nodes in the input graph (and we use this convention throughout the paper).

Subsequently, Heath (1991) claimed a proof of Chartrand et al.’s conjecture and gave an $O(n^2 \log n)$-time algorithm to partition the edges of any planar graph into at most two outerplanar graphs~\cite{Heath91}. Later, Gon{\c{c}}alves (2005) established the same statement with an optimal $O(n)$-time construction~\cite{Goncalves05}. It is worth noting that Gon{\c{c}}alves remarked that some earlier claimed proofs were later found to be incorrect, and subsequent work attribute the resolution of Chartrand et al.’s conjecture to Gon{\c{c}}alves~\cite{DujmovicWood07,XuZha18,DiGiacomo18,BekosEtAl19}. 

Unlike the substantial progress on planar graphs, there has been little progress on determining outerthickness for general graphs. M\"{a}kinen and Poranen (2004) conjectured in~\cite{Poranen04,PoranenM05} that deciding the outerthickness of a general graph is NP-hard. Since then, the open status of this problem has been repeatedly highlighted in subsequent work~\cite{Poranen08,BatraGPC10,MakinenP12,SuZ21}, and was reiterated as recently as 2023 by Hlinen\'{y} and Masar\'{\i}k~\cite{HlinenyM23} and 2024 by Balko et al.~\cite{BalkoHMOVW24}. We further note that Balko et al.\ prove NP-hardness for computing the uncrossed number under the assumption that computing outerthickness is NP-hard~\cite{BalkoHMOVW24}.

In this paper, we settle the computational complexity of determining the outerthickness of general graphs, a question that has remained open since outerthickness was introduced in 1974. Formally, we consider the following decision problem.

\begin{problem}[Outerthickness]
\label{prob:outerthickness}
~
\begin{itemize}
    \item \textbf{Input:} An $n$-node undirected simple graph $G=(V,E)$ and an integer $k\ge 1$.
    \item \textbf{Question:} Does $G$ admit an edge partition $E=E_1\cup E_2\cup \cdots \cup E_k$
    such that $(V,E_i)$ is outerplanar for every $i\in[k]$?
\end{itemize}
\end{problem}

We use \textsc{OuterThickness}$(G, k)$ to denote \cref{prob:outerthickness} with input graph $G$ and parameter $k$ for short. Our main result is the following.
\begin{theorem}
\label{thm:kernel}
For every fixed integer $k \ge 3$, \textsc{Outerthickness}$(G, k)$ for general graphs $G$ is NP-complete.
\end{theorem}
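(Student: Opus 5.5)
Membership in NP is immediate: a certificate is the partition $E_1,\dots,E_k$, and outerplanarity of each $(V,E_i)$ can be tested in linear time. For hardness, I would reduce from \textsc{$k$-Edge-Colouring} of $k$-regular graphs, which is NP-complete for every fixed $k\ge 3$ (Holyer for $k=3$, Leven and Galil for $k\ge 4$). The target is a map $H\mapsto G_H$ with the following property: $G_H$ has outerthickness at most $k$ if and only if $H$ has a proper $k$-edge-colouring. I would argue only from three facts, to match the abstract: outerplanar graphs are closed under topological minors, they are closed under $1$-sums, and they contain $K_3$. Being a proper class, the outerplanar graphs exclude some graph (for instance $K_4$ or $K_{2,3}$) as a topological minor.

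\textbf{Step 1 (rigid gadget).} First I would build a forcing gadget $X$: a graph whose edges can be covered by $k$ outerplanar graphs, but only in an essentially unique way up to permuting the parts. Concretely, I want $X$ to contain a distinguished vertex $r$ and edges $a_1,\dots,a_k$ at $r$ such that every $k$-cover places $a_1,\dots,a_k$ in $k$ pairwise distinct parts. One candidate is a dense graph just below the threshold, obtained by taking many copies of a graph of outerthickness exactly $k$ and identifying vertices. Counting edges against the bound $|E|\le 2n-3$ per outerplanar part would make every part nearly maximal, which leaves no slack. Closure under $1$-sums is what lets me glue such gadgets at single vertices without creating new forbidden topological minors in any part.

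\textbf{Step 2 (vertex and edge gadgets).} For each vertex $v$ of $H$, I would attach gadget copies so that each of the $k$ incident edges of $v$ is tied to a gadget edge $a_i$. The tie should be through a triangle or a short path whose edges are forced, via a local forbidden subdivision, into the same part as the original edge. For each edge $uv$ of $H$, I would use a small connector so that both ends receive the same part index. I would aim to show that "two edges at $v$ in the same part" produces a subdivision of $K_4$ or $K_{2,3}$ inside that part, together with the saturated gadget structure.

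\textbf{Step 3 (correctness).} Completeness: given a $k$-edge-colouring of $H$, put each colour class together with the matching gadget parts. Each part is then a $1$-sum of outerplanar pieces (triangles and the parts of $X$) glued along a matching, hence outerplanar. Soundness: in any $k$-cover of $G_H$, the rigidity of $X$ forces the incident edges at each vertex into distinct parts, which yields a proper edge-colouring.

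I expect Step 1, the design and rigidity proof of $X$, to be the main obstacle. I would first try edge counting with near-maximal outerplanar parts (tight $2n-3$ bounds) to force the structure of each part, and fall back on an explicit case analysis for $k=3$ that is then lifted to larger $k$ by adding universal-type padding.
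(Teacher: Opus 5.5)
\textbf{There is a genuine gap in the soundness direction.} Your source problem (Holyer, Leven--Galil) and your NP-membership argument match the paper, and so does the completeness direction ($1$-sums of triangles onto the parts of a fixed gadget). But soundness rests entirely on the gadget $X$ of Step 1 and the forced ``ties'' of Step 2, and neither is constructed or proved.

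The route you propose for them is unlikely to succeed. Counting edges against $2n-3$ per part can at best show that the parts are nearly saturated. It does not tell you which part a specific edge $a_i$ lies in, and there is no reason to expect a dense graph of outerthickness $k$ to have an essentially unique $k$-partition. Likewise, producing a subdivision of $K_4$ or $K_{2,3}$ when two edges at $v$ share a part would require knowing which paths inside the gadget that part contains, which is exactly what you cannot control. The ``universal-type padding'' for $k>3$ is also left unspecified.

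\textbf{The missing idea is to replace rigidity of the partition by edge-maximality of the gadget.} Let $X$ be a constant-size graph that is edge-maximal with outerthickness $k$. For example, take a maximum-edge subgraph of $K_C$ of outerthickness at most $k$; it has outerthickness exactly $k$, since one extra edge raises outerthickness by at most one. Then for \emph{every} $k$-partition of $E(X)$ and every pair of nonadjacent vertices $y\neq y'$ of $X$, no part can also contain a $y$--$y'$ path whose interior lies outside $V(X)$. Otherwise, smoothing the interior would give a partition of $X+yy'$ into $k$ outerplanar graphs. No information about how $X$ is partitioned is needed.

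The paper then builds the reduction as follows:
\begin{enumerate}
\item Take an independent set $w_1,\dots,w_\alpha$ in $X$, which exists by Caro--Wei since $X$ is sparse.
\item Take a labelling $\varphi$ of the edges of your input graph with $O(k^2)$ labels, such that the edges of every path of length at most $3$ get pairwise distinct labels.
\item For each input edge $e=uv$, add the triangle $u\,v\,w_{\varphi(e)}$.
\end{enumerate}

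The soundness argument then goes through in three steps.
\begin{enumerate}
\item Your desired property at $r$ now holds at every input vertex $v$. Its $k$ edges into $X$ go to pairwise nonadjacent $w$'s, so they lie in $k$ distinct parts.
\item Hence the part containing $uv$ contains exactly one $X$-edge at $u$ and one at $v$. These must end at a common $w$, since otherwise $w\,u\,v\,w'$ is a forbidden path. The labelling forces this common vertex to be $w_{\varphi(e)}$.
\item Finally, two input edges $ab,bc$ in one part would place both $b\,w_{\varphi(ab)}$ and $b\,w_{\varphi(bc)}$ in that part, a contradiction. So each part restricted to the input graph is a matching, which gives a proper $k$-edge-colouring.
\end{enumerate}
This argument works uniformly for every $k\ge 3$, with no padding needed.
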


Moreover, our hardness result applies to a more general covering problem $P_{\mathcal{F}}$, defined as
follows. Fix a proper graph class $\mathcal{F}$ whose membership is decidable. Given an
undirected simple graph $G = (V, E)$ and an integer $k$, the task is to cover the edge set $E(G)$ by at most $k$ subsets
$E_1,\ldots,E_k$ such that each subgraph $(V(G),E_i)$ belongs to $\mathcal{F}$. Note that if
$\mathcal{F}$ is \emph{monotone}\footnote{If $G \in \mathcal{F}$ and $H$ is a subgraph of $G$, then $H \in \mathcal{F}$.
} (in particular, when $\mathcal{F}$ is the class of all outerplanar graphs),
any such cover can be converted into an edge partition by deleting overlaps; hence, in this case,
covering and partitioning are equivalent.

We generalize the NP-hardness of \textsc{OuterThickness}$(G, k)$ to $P_{\mathcal{F}}$ for every \emph{proper}\footnote{A graph class $\mathcal{F}$ is \emph{proper} if it does not contain all graphs.} $\mathcal{F}$ whose membership is decidable and that satisfies the three conditions stated below:
\begin{theorem}\label{thm:main}
For every proper graph class $\mathcal{F}$ whose membership is decidable and that satisfies all of the following conditions:
\begin{enumerate}[(a)]
    \item $\mathcal{F}$ is \emph{closed under topological minors}\footnote{A graph class $\mathcal{F}$ is \emph{closed under topological minors} if for every $G\in\mathcal{F}$
and every graph $H$ that is a topological minor of $G$, we have $H\in\mathcal{F}$. Here $H$ is a
\emph{topological minor} of $G$ if some subdivision of $H$ appears as a subgraph of $G$ (equivalently,
$H$ can be obtained from a subgraph of $G$ by repeatedly smoothing degree-$2$ nodes).},
    \item $\mathcal{F}$ is \emph{closed under $1$-sums}\footnote{A graph class $\mathcal{F}$ is \emph{closed under $1$-sums} if for every $G_1,G_2\in\mathcal{F}$
    and every choice of nodes $v_1\in V(G_1)$ and $v_2\in V(G_2)$, the graph obtained from the disjoint union $G_1 \uplus G_2$ by identifying $v_1$ and $v_2$ into a single node also belongs to $\mathcal{F}$.}, and
    \item $\mathcal{F}$ contains a  cycle of length $3$,
\end{enumerate}
the problem $P_{\mathcal{F}}$ is NP-hard for every fixed integer $k\ge 3$.
\end{theorem}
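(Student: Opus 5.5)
We reduce from \textsc{$k$-Edge-Colouring}: decide whether a $k$-regular graph $H$ has chromatic index $k$. Holyer proved this NP-complete for $k=3$, and Leven and Galil for every fixed $k\ge 3$. The guiding idea is that, because $\mathcal{F}$ is proper and closed under topological minors, it excludes some graph, and hence excludes some large structure. We force each $E_i$ to be a matching at the vertices of $H$ by attaching gadgets that use up the ``capacity'' of each part locally.

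First we isolate what $\mathcal{F}$ must exclude. Since $\mathcal{F}$ is proper and closed under topological minors (and so under subgraphs), it excludes some graph. Let $t$ be minimal such that $K_{1,t}$-containing structures are forbidden. More concretely, since $\mathcal{F}$ is closed under 1-sums and contains $C_3$ (hence, by topological minors, all cycles, paths and trees built from them), the missing graphs must be non-trivial. The cleanest choice is a graph $B\notin\mathcal{F}$ that is minimal under topological minors. Then deleting an edge, or smoothing, puts every proper topological minor of $B$ into $\mathcal{F}$. Because 1-sums of members stay in $\mathcal{F}$, $B$ must be $2$-connected, except possibly in degenerate cases handled separately.

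Second, we build the gadget. For each vertex $v$ of $H$ we attach copies of a gadget $D$ (built from $B$ with one edge subdivided or removed) through 1-sums or through sharing $v$. The gadget $D$ should have two properties:
\begin{itemize}
\item[(i)] $D$ itself is coverable by $k$ members of $\mathcal{F}$.
\item[(ii)] In every $k$-cover of $D$, each class $i$ already contains a structure at $v$ such that adding any further \emph{two} edges of class $i$ incident to $v$ (leading to outside paths, which together with $C_3$-type closure recreate a subdivision of $B$) leaves $\mathcal{F}$.
\end{itemize}
The triangle condition (c) is what makes gadget edges usable. A natural candidate is a large complete-like graph whose $k$-covers are forced by counting, for example $K_N$ with $N$ chosen so that the $k$ classes are each ``saturated''. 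Here the decidability of membership lets us find $N$ and $D$ by brute force, since the sizes depend only on $\mathcal{F}$ and $k$.

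Third, we prove correctness. For the forward direction, a proper $k$-edge-colouring of $H$ gives each class a matching in $H$. Together with the gadget covers, each class is then a 1-sum tree of members of $\mathcal{F}$ and single edges, so it lies in $\mathcal{F}$ by (b). For the converse, property (ii) forces each class restricted to $E(H)$ to have degree at most one at every vertex. Since $H$ is $k$-regular with $k$ classes, we can delete the overlaps of the cover and obtain a proper edge-colouring, using topological-minor closure to pass to subgraphs.

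The main obstacle is constructing $D$ with property (ii) for an arbitrary class $\mathcal{F}$. We must exhibit, uniformly, a forced configuration in which two extra class-$i$ edges at $v$ complete a subdivision of a forbidden graph. My first attempt would use $D$ equal to many copies of a minimal non-member $B$, with one edge $xy$ removed, glued at $x=v$ and arranged so that $y$ is reachable from $v$ via $H$-paths. I expect the difficulty to lie in controlling which class gets the ``almost-$B$'' in every cover, and it may require a counting or pigeonhole argument over many gadget copies.
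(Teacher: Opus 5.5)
Your reduction source and overall skeleton are right, but there is a genuine gap. Property (ii) of $D$ carries the entire reduction, you leave it unconstructed, and the attachment you describe cannot achieve it.

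Under (a) and (b), a graph lies in $\mathcal{F}$ if and only if each of its blocks does. Blocks are subgraphs; conversely, you reassemble along the block--cut tree by 1-sums, and a disjoint union arises from a 1-sum with $K_2$ followed by deleting that edge. So if a gadget meets the rest of the graph only at $v$ (your ``1-sums or sharing $v$''), then in every class the gadget edges and the other edges lie in different blocks. No cover of $D$ can then restrict how the edges of the input graph at $v$ are coloured. For example, with outerplanar $\mathcal{F}$ and $k=3$, every cubic graph has arboricity at most $2$, so every instance would map to a yes-instance. A forcing gadget must be entered from the input graph at two or more non-adjacent vertices. Once it is, your forward direction no longer follows automatically from (b), so gadget and attachment must be designed together. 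Gluing many copies of $B-xy$ does not help either: by minimality each copy lies in $\mathcal{F}$ and can be absorbed entirely by one class. Also, your remark about forbidden $K_{1,t}$ is wrong, since by (b) and (c) $\mathcal{F}$ contains every forest.

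The missing idea, which the paper uses, is a single constant-size graph $Q$ that is edge-maximal with $\mathcal{F}$-thickness exactly $k$. It is found by brute force among subgraphs of $K_C$, using decidability and the fact that thickness rises by at most one per added edge. Edge-maximality makes every part of \emph{every} $k$-partition of $Q$ saturated. For non-adjacent $y,y'\in V(Q)$, adding to any one part a $y$--$y'$ path internally disjoint from $Q$ leaves $\mathcal{F}$, since smoothing the path would yield a $k$-partition of $Q+yy'$. This removes your worry about which class receives the almost-$B$.

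To obtain many pairwise non-adjacent attachment points, the paper argues as follows:
\begin{itemize}
\item Some $K_t\notin\mathcal{F}$, so by Bollob\'as--Thomason every member has at most $D_t|V|$ edges.
\item Hence $|E(Q)|\le kD_tC$.
\item Caro--Wei then gives an independent set $w_1,\dots,w_\alpha$ with $\alpha\ge|\mathcal{L}|$.
\end{itemize}

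Write $G$ for the input graph (your $H$), with a labelling $\varphi$ giving distinct labels to edges on any path of length at most $3$. Each edge $e=uv$ is joined to $w_{\varphi(e)}$ by the edges $uw_{\varphi(e)}$ and $vw_{\varphi(e)}$. The argument then runs:
\begin{itemize}
\item Two same-class edges from a vertex $v$ into $Q$ form a forbidden path of length $2$, so the $k$ edges from $v$ into $Q$ use all $k$ classes.
\item For $e=xy$ in class $i$, the class-$i$ edges $xw$ and $yw'$ must satisfy $w=w'$, else $w,x,y,w'$ is a forbidden path.
\item The labelling makes $w_{\varphi(e)}$ the only common neighbour of $x$ and $y$ in $Q$, so each triangle $x,y,w_{\varphi(e)}$ is monochromatic.
\item Hence each class meets $E(G)$ in a matching.
\end{itemize}
In the yes-direction, each class is one part of $Q$ with these triangles hung at the $w_\ell$ by 1-sums. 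That is exactly where (b) and (c) are used.
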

Moreover, if membership in $\mathcal{F}$ can be decided by an algorithm $M_{\mathcal{F}}$ in polynomial time, then $P_{\mathcal{F}} \in \mathrm{NP}$.
A certificate is an edge covering of $E(G)$ by $k$ subsets $E_i$ for $i\in[k]$, and we can verify in polynomial time
that $(V(G),E_i)\in\mathcal{F}$ for every $i\in[k]$ by running $M_{\mathcal{F}}$ on each part. Hence the certificate
certifies that the $\mathcal{F}$-thickness of $G$ is at most $k$. As a result, \cref{thm:main} also yields NP-completeness if membership in 
$\mathcal{F}$ can be decided in polynomial time.

As a corollary of \cref{thm:main}, by setting $\mathcal{F}$ to be the class of all planar graphs, we obtain
\cref{cor:thickness}, which complements Mansfield's NP-hardness result (1983) for the thickness~\cite{Mansfield83},
established only for the case $k=2$.

\begin{corollary}
\label{cor:thickness}
For every fixed integer $k \ge 3$, \textsc{Thickness}$(G, k)$ for general graphs is NP-complete.
\end{corollary}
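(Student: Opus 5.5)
The corollary follows from \cref{thm:main} with $\mathcal{F}$ the class of all planar graphs, plus an NP-membership argument. We check the hypotheses of \cref{thm:main} and then apply the remark after it on polynomial-time decidable membership.

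For the hypotheses, first note that $\mathcal{F}$ is proper, since $K_5\notin\mathcal{F}$. Membership is decidable, and in fact decidable in linear time by the Hopcroft--Tarjan planarity test~\cite{HopcroftT74}. Condition~(a) holds for two reasons. Subgraphs of planar graphs are planar, since we may restrict an embedding. Smoothing a degree-$2$ node replaces a path of two embedded edges by one curve, which preserves planarity. So every topological minor of a planar graph is planar. Condition~(b) holds as follows. Given planar $G_1,G_2$ and nodes $v_1,v_2$, re-embed each $G_i$ so that $v_i$ lies on its outer face; this is possible by stereographic projection onto a face incident to $v_i$. Then place the two drawings in disjoint regions touching only at the identified point. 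The result is a plane embedding of the $1$-sum. Condition~(c) holds trivially, since $K_3$ is planar.

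Hence, for every fixed $k\ge 3$, $P_{\mathcal{F}}$ is NP-hard by \cref{thm:main}. Because $\mathcal{F}$ is monotone, covering and partitioning coincide, so $P_{\mathcal{F}}$ is exactly \textsc{Thickness}$(G,k)$.

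For membership in NP, the certificate is the edge partition $E_1,\ldots,E_k$, and each part is verified by a polynomial-time planarity test. This gives NP-completeness.

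There is no real obstacle. The only step needing care is closure under $1$-sums, which requires placing the identified node on the outer face of each drawing.
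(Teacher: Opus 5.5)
Your proposal is correct and takes the same route as the paper. The paper also instantiates Theorem~\ref{thm:main} with $\mathcal{F}$ the class of planar graphs, and gets NP membership from its remark that polynomial-time membership testing (here Hopcroft--Tarjan) places $P_{\mathcal{F}}$ in NP. Your explicit checks of properness and Conditions~(a)--(c), which the paper only records in its summary table, are accurate.
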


It may be worth noting that Mansfield's NP-hardness proof for the thickness proceeds via a reduction
from Planar-3SAT~\cite{Lichtenstein82} due to Lichtenstein (1982), which is technical.
In contrast, our hardness proof uses a more direct sources: for $k=3$, we reduce from the
chromatic index problem for $3$-regular graphs due to Holyer (1981)~\cite{Holyer81}, and for every fixed
integer $k\ge 3$, we reduce from the chromatic index problem for $k$-regular graphs due to Leven and
Galil (1983)~\cite{LevenG83}.

\begin{table}[!ht]
\centering
\label{tab:template}
\begin{tabular}{lccccccc}
$\mathcal{F}$ & Cond. (a) & Cond. (b) & Cond. (c) & $k = 1$ & $k=2$ & each $k \ge 3$ \\
\midrule
\midrule
forests & \yes & \yes & \no & P & P~\cite{Edmonds65} & P~\cite{Edmonds65}  \\
pseudo-forests & \yes & \no & \yes & P & P~\cite{Edmonds65} & P~\cite{Edmonds65}  \\
eulerian graphs & \no & \yes & \yes & P & \othersNP{\cite{PetrusevskiS21}} & P~\cite{AlonT85}  \\
cacti & \yes & \yes & \yes & P &  & \ourNP  \\
outerplanar graphs & \yes & \yes & \yes & P~\cite{Mitchell79} & & \ourNP   \\
planar graphs & \yes & \yes & \yes & P~\cite{HopcroftT74} & \othersNP {\cite{Mansfield83}} & \ourNP  \\
graphs of $tw(G) \le t$ & \yes & \yes & \yes & P~\cite{Bodlaender96} & & \ourNP \\
\bottomrule
\end{tabular}

\smallskip
\caption{Graph classes $\mathcal{F}$. Note that  eulerian graphs here including  disconnected ones, and $tw(G)$ denotes the treewidth of $G$. The results highlighted in blue are proved in \cref{thm:main}.\label{tab:classes}}
\end{table}

As illustrated in~\cref{tab:classes}, each of the three conditions mentioned in~\cref{thm:main} is necessary. If $\mathcal{F}$ is the class of all eulerian graphs (including disconnected ones), then condition~(a) fails. If $\mathcal{F}$ is the class of all pseudoforests,
then condition~(b) fails. If $\mathcal{F}$ is the class of all forests, then condition~(c) fails. For each of these three classes $\mathcal{F}$, the problem $P_{\mathcal{F}}$ is solvable in polynomial time for every fixed integer $k \ge 3$ (the case $P_{\mathrm{eulerian}}$ is due to Alon and Tarsi~\cite{AlonT85}, and the cases $P_{\mathrm{forests}}$ and $P_{\mathrm{pseudoforests}}$ follow from Edmonds~\cite{Edmonds65}), showing that none of the three conditions can be dropped.


\subsection{Paper Organization}

In \cref{sec:prel}, we introduce our notation. In \cref{sec:kernel}, we present a minimal proof
showing that outerthickness is NP-hard, thereby settling the long-standing open question on its
computational complexity. Then, in \cref{sec:main}, we prove our main theorem (\cref{thm:main}),
which in particular establishes \cref{thm:kernel} for every fixed integer $k>3$ and strengthens the
minimal proof from \cref{sec:kernel}. We conclude with remarks in \cref{sec:con}.

\section{Preliminaries}\label{sec:prel}

All graphs and subgraphs considered in this paper are finite, undirected, and simple. For a graph $G$, we write
$V(G)$ and $E(G)$ for its node set and edge set, respectively. For convenience, we sometimes use an edge subset
$E'\subseteq E(G)$ to denote the subgraph of $G$ induced by $E'$.

Fix a graph property $\mathcal{P}$. A graph $G$ is said to be \emph{edge-maximal} with respect to  $\mathcal{P}$ if $G$ satisfies $\mathcal{P}$, but for every pair of non-adjacent nodes $u, v \in V(G)$, the graph obtained by adding the edge $\{u, v\}$ to $G$ does not satisfy property $\mathcal{P}$.

For every integer $m \ge 1$, we use the notation $[m]$ to denote the set of integers $\{1, \dots, m\}$. The reduction source of our problem is defined below.

\begin{theorem}[Edge-Coloring~\cite{Holyer81,LevenG83}]
\label{thm:edgecolor}
Fix an integer $k \ge 3$.
A \emph{proper $k$-edge-coloring} of an undirected graph $G = (V,E)$ is a mapping
$c : E \to \{1,2,\dots,k\}$ such that $c(e) \ne c(f)$ for every pair of edges
$e,f \in E$ that share an endnode.
Given a $k$-regular undirected simple graph $G$, deciding whether $G$ admits a proper
$k$-edge-coloring is NP-complete.
\end{theorem}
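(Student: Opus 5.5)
Membership in NP is immediate: a certificate is the coloring $c$ itself, and checking that $c(e)\neq c(f)$ for all adjacent pairs takes $O(|E|^2)$ time. For hardness I would not rebuild the whole theory. I would follow Holyer's gadget reduction for $k=3$ and then lift it to every $k\ge 3$ in the manner of Leven and Galil. The source problem is 3-SAT, or equivalently a monotone not-all-equal variant.

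\textbf{Case $k=3$.} The basic tool is the \emph{parity lemma}. In any proper $3$-edge-coloring of a cubic graph, let $S$ be a node subset and let $\delta(S)$ be the set of edges leaving it. For each color $i$, the number of edges of $\delta(S)$ colored $i$ has the same parity as $|\delta(S)|$. This holds because every node of $S$ sees each color exactly once, so $|S|\equiv|\delta(S)_i| \pmod 2$ for every color $i$.

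Using this lemma I would build and verify an \emph{inverting component}. It is a small subcubic gadget with four pairs of dangling edges and has two properties.
\begin{itemize}
\item[(i)] In every $3$-edge-coloring, within each pair the two edges are either equal or distinct. At least three of the four pairs are of the same kind.
\item[(ii)] Every pattern allowed by (i) extends to a coloring of the gadget.
\end{itemize}
Reading ``pair equal'' as true and ``pair distinct'' as false, the gadget acts as a wire that negates its signal.

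From inverting components I would build two further gadgets.
\begin{itemize}
\item A \emph{variable gadget} is a ring of inverting components, one per occurrence of the variable. The ring forces all output pairs to carry a consistent truth value or its negation.
\item A \emph{clause gadget} joins three output pairs. By the parity lemma it is colorable if and only if not all three inputs are false.
\end{itemize}
The remaining dangling edges are closed off by attaching copies of a fixed cubic graph minus a node, which gives a cubic graph $H$. Finally I would check both directions: a satisfying assignment yields a $3$-edge-coloring using (ii), and a coloring yields an assignment using (i) together with the parity lemma.

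\textbf{Case $k\ge 4$.} I would generalize the construction. Truth values are encoded on a bundle of $k-1$ edges that are forced to use $k-1$ of the $k$ colors; the missing color carries the signal. The generalized parity lemma still applies, since every node of a $k$-regular graph sees each color exactly once. Using it I would design $k$-regular replacements for the inverting and clause gadgets, and pad the construction to make it $k$-regular with $k$-edge-colorable fillers, for instance copies of $K_{k+1}$ minus a perfect matching or a similar structure.

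\textbf{Main obstacle.} The hardest step is the design and exhaustive verification of the inverting component, and of its $k$-color analogue, especially the completeness claim (ii) that every allowed boundary pattern extends. I would first fix a concrete small gadget. I would then verify (i) by the parity lemma applied to suitable cuts, and verify (ii) by an explicit case analysis up to permuting colors.
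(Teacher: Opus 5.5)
The paper does not prove this statement. It imports it as a black box from Holyer ($k=3$) and Leven--Galil (all $k\ge 3$), so your NP-membership remark plus those citations is the paper's whole argument. Read as a self-contained reconstruction, however, your sketch contains steps that would fail as written.

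\textbf{The inverting component.} Property (i) does not encode negation. ``Each pair is equal or distinct'' is vacuous. With eight boundary edges, your own parity lemma forces every color count to be even, which rules out exactly one distinct pair. So (i) only says the pattern is all-equal, all-distinct, or has exactly one equal pair, and none of these is a NOT gate. What you need, and what Holyer's gadget provides, is an input pair, an output pair and an auxiliary fifth edge, such that in every coloring exactly one of the two pairs is monochromatic. Parity on the five boundary edges gives the $3+1+1$ color split, but the exclusivity comes from the gadget's internal structure and must be checked.

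\textbf{The clause gadget.} It cannot be certified ``by the parity lemma.'' On three boundary pairs, the all-false pattern $(1,2),(2,3),(3,1)$ uses every color twice and satisfies every parity constraint. Excluding it needs a concrete gadget and a case analysis, and the ``if'' direction needs explicit extensions.

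\textbf{Closing off dangling edges.} Attaching copies of a cubic graph minus a node forces each attached triple to be rainbow, since $|S|=|V(F)|-1$ is odd. Your intended coloring does not guarantee this, and the step fails outright if the chosen cubic graph is not $3$-edge-colorable. A standard fix is to take two copies of the construction and join corresponding dangling edges; duplicating any coloring of one copy then gives a proper coloring.

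\textbf{The case $k\ge 4$.} Here there is no construction yet. A bundle of $k-1$ edges missing one color carries a $k$-valued signal, not a Boolean one. You do not say how to binarize it, nor what the inverter and clause gadgets are.

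The parity lemma also changes shape. In a $k$-regular graph, the number of edges of each color in $\delta(S)$ is congruent to $|S|$ modulo $2$. For even $k$ the cut size $|\delta(S)|=k|S|-2|E(S)|$ is always even, so the $k=3$ form ``same parity as $|\delta(S)|$'' is false, and cuts carry much weaker information.

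Your suggested filler, $K_{k+1}$ minus a perfect matching, does not exist for even $k$, since $k+1$ is odd. For odd $k$ it again constrains the boundary colors, because each color must appear an even number of times among its $k+1$ free edges.

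This generalization is exactly the nontrivial content of Leven--Galil. Either cite it, as the paper does, or supply explicit gadgets with full verification of both directions.
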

\section{NP-Hardness of Outerthickness}
\label{sec:kernel}

Given an input instance $G$ of \textsc{Edge-Coloring} on $3$-regular graphs, defined in~\cref{thm:edgecolor}. We compute the following label function $\mathcal{L}$ and an auxiliary graph $H$ in polynomial time. Then, we obtain a supergraph $G'$ from joining $G$ and $H$ using label function $\mathcal{L}$ such that $G$ admits a 3-edge-coloring if and only if $G'$ has an edge-partition into 3 outerplanar subgraphs (\cref{lem:kernel}). This proves the NP-hardness stated in \cref{thm:kernel} for $k = 3$. Moreover, \textsc{OuterThickness}$(G,3)\in \mathrm{NP}$: a certificate is an edge partition of $E(G)$ into $3$ subsets
$E_i$ for $i\in[3]$, and we can verify in polynomial time that $(V(G),E_i)$ is outerplanar for every $i\in[3]$ by
running the linear-time outerplanarity test of Mitchell~\cite{Mitchell79} on each part. Hence the certificate certifies
that the outerthickness of $G$ is at most $3$.

\begin{itemize}
    \item Let $\varphi: E(G)\to \mathcal{L}$ be a labeling function, where $\mathcal{L} = \{1, 2, \ldots, |\mathcal{L}|\}$ is a constant-size (not necessarily minimum-size) label set, such that for every path of length at most $3$ in $G$, the edges on the path receive pairwise distinct labels under $\varphi$. The existence of $\varphi$, as well as a construction running in time polynomial in the input size of $G$, is guaranteed by \cref{pro:varphi}.

\medskip

    \item We pick a sufficiently large constant $C$ such that every $C$-node graph with outerthickness $3$ contains an (not necessarily maximum) independent set of $\alpha \ge |\mathcal{L}|$ nodes. Then we construct a $C$-node graph $H$ that is edge-maximal with outerthickness $3$; that is, adding any edge not in $E(H)$ increases the outerthickness of $H$. Given the promise of $C$, let $w_1, w_2, \ldots, w_\alpha$ be the $\alpha$ nodes in an independent set in $H$. The existence of $H$, as well as a construction running in time polynomial in the input size of $G$, is guaranteed by \cref{pro:H}.
\end{itemize}
We initialize $G'$ by taking copies of $G$ and $H$ with disjoint node sets, and then identifying $G'$ with their union. Then, for each edge $e\coloneqq\{u,v\}\in E(G)$, we add the edges $\{u,w_{\varphi(e)}\}$ and $\{v,w_{\varphi(e)}\}$ to $G'$, as illustrated in~\cref{fig:sparse_gadget}. Note that $G'$ remains a simple graph: since any two edges incident to the same node in $G$ receive distinct labels under $\varphi$, no parallel edges are created.

\begin{figure}[!h]
    \centering
    \begin{tikzpicture}[
    scale=1.0,
    vertex/.style={circle, draw, fill=black, minimum size=5pt, inner sep=0pt, line width=0.6pt},
    svertex/.style={circle, draw, fill=white, minimum size=6pt, inner sep=0pt, line width=0.6pt},
    gedge/.style={line width=0.8pt},
    t1edge/.style={line width=0.6pt, dashed},
    t2edge/.style={line width=0.6pt, dotted},
    font=\small
]

\begin{scope}[shift={(0, 3.5)}]
    \draw[line width=0.8pt, rounded corners=12pt] (-4.5, -0.8) rectangle (4.5, 0.8);

    \node[font=\small] at (-3.8, 0.4) {$H$};

    \node[svertex, label={[font=\scriptsize]above:$w_1$}] (s1) at (-2, 0) {};
    \node[svertex, label={[font=\scriptsize]above:$w_2$}] (s2) at (0, 0) {};
    \node[svertex, label=
    {[font=\scriptsize]above:$w_3$}] (s3) at (2, 0) {};
    \node at (3,0) {$\cdots$};
\end{scope}

\node[vertex] (u000) at (-3.5, 0) {};
\node[vertex] (u001) at (-2.5, 0) {};
\node[vertex] (u011) at (-1.5, 0) {};
\node[vertex] (u010) at (-0.5, 0) {};
\node[vertex] (u110) at ( 0.5, 0) {};
\node[vertex] (u111) at ( 1.5, 0) {};
\node[vertex] (u101) at ( 2.5, 0) {};
\node[vertex] (u100) at ( 3.5, 0) {};

\draw[gedge, draw=gray] (u000) -- (u001) node[midway, above, font=\scriptsize, text=gray] {$5$};
\draw[gedge, draw=gray] (u001) -- (u011) node[midway, above, font=\scriptsize, text=gray] {$4$};
\draw[gedge, draw=gray] (u011) -- (u010) node[midway, above, font=\scriptsize, text=gray] {$6$};
\draw[gedge]            (u010) -- (u110) node[midway, above, font=\scriptsize] {$2$};
\draw[gedge, draw=gray] (u110) -- (u111) node[midway, above, font=\scriptsize, text=gray] {$5$};
\draw[gedge, draw=gray] (u111) -- (u101) node[midway, above, font=\scriptsize, text=gray] {$3$};
\draw[gedge, draw=gray] (u101) -- (u100) node[midway, above, font=\scriptsize, text=gray] {$6$};

\draw[gedge, draw=gray] (u000) to[bend right=35] node[midway, below, font=\scriptsize, text=gray] {$3$} (u010); 
\draw[gedge]            (u011) to[bend right=35] node[midway, below, font=\scriptsize] {$1$} (u111); 
\draw[gedge]            (u001) to[bend right=45] node[midway, below, font=\scriptsize] {$2$} (u101); 
\draw[gedge, draw=gray] (u110) to[bend right=35] node[midway, below, font=\scriptsize, text=gray] {$4$} (u100); 
\draw[gedge]            (u000) to[bend right=55] node[pos=0.5, below, font=\scriptsize] {$1$} (u100); 

\node[font=\small] at (-4.5, 0) {$G$};

\draw[t1edge] (u000) -- (s1);
\draw[t1edge] (u100) -- (s1);
\draw[t1edge] (u011) -- (s1);
\draw[t1edge] (u111) -- (s1);

\draw[t2edge] (u001) -- (s2);
\draw[t2edge] (u101) -- (s2);
\draw[t2edge] (u010) -- (s2);
\draw[t2edge] (u110) -- (s2);

\end{tikzpicture}
    \caption{An illustration of the construction of $G'$. 
    For each edge $e\coloneqq\{u,v\}\in E(G)$, we add the edges $\{u,w_{\varphi(e)}\}$ and $\{v,w_{\varphi(e)}\}$ to $G'$.
    In this example, we have $\mathcal{L} = \{1, 2, \ldots, 6\}$, and we depict the added edges corresponding to $e$ with $\varphi(e) \in \{1,2\}$.
    \label{fig:sparse_gadget}
    }
\end{figure}
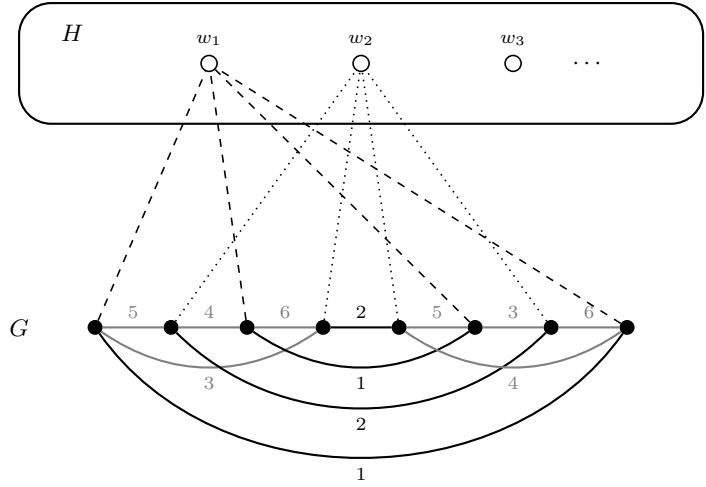

We need the following observation for our main lemma.

\begin{observation}\label{obs:extrapath}
Fix an integer $z \ge 1$. Let $Q$ be a graph that is edge-maximal with outerthickness $z$, and let $E_1, E_2, \ldots, E_z$ be an edge-partition of $E(Q)$ into $z$ outerplanar subgraphs.
Let $P$ be a path with endpoints $x$ and $x'$
such that $V(P)\cap V(Q)=\varnothing$. For any nodes $y, y'\in V(Q)$, define
\[
S_{y, y'} \coloneqq E_1 \cup P \cup \bigl\{\{x,y\},\{x',y'\}\bigr\}.
\]

\begin{enumerate}[(a)]
    \item If $y=y'$, then $S_{y,y'}$ is outerplanar.
    \item If $y\ne y'$ and $\{y,y'\}\notin E(Q)$, then $S_{y,y'}$ is not outerplanar.
\end{enumerate}
\end{observation}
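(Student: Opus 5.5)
For part~(a), I will exhibit an outerplanar embedding of $S_{y,y}$ directly. Since $E_1$ is outerplanar, fix an outerplanar embedding of $(V(Q),E_1)$ in which every node, in particular $y$, lies on the outer face. The added part $P\cup\{\{x,y\},\{x',y\}\}$ is either a cycle through $y$ (when $x\ne x'$) or a pendant edge at $y$ (when $P$ is the single node $x=x'$). Its only node in common with $V(Q)$ is $y$. So $S_{y,y}$ is the $1$-sum at $y$ of $(V(Q),E_1)$ with a cycle or an edge, possibly plus isolated nodes.

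A cycle and an edge are each outerplanar, and outerplanar graphs are closed under $1$-sums. Concretely, I draw the cycle inside a small region of the outer face touching $y$, so that all its nodes lie on the outer face. Hence $S_{y,y}$ is outerplanar.

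For part~(b), I argue by contradiction using edge-maximality. Suppose $S_{y,y'}$ is outerplanar with $y\ne y'$ and $\{y,y'\}\notin E(Q)$. The subgraph $P\cup\{\{x,y\},\{x',y'\}\}$ is a path from $y$ to $y'$ whose internal nodes $V(P)$ lie outside $V(Q)$. Smoothing these internal nodes, which all have degree $2$ in $S_{y,y'}$ because $V(P)\cap V(Q)=\varnothing$, turns this path into the single edge $\{y,y'\}$. Since $\{y,y'\}\notin E_1$, the result is the simple graph $(V(Q),E_1\cup\{\{y,y'\}\})$, which is a topological minor of $S_{y,y'}$.

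Outerplanar graphs are closed under topological minors, because they are exactly the graphs with no subdivision of $K_4$ or $K_{2,3}$. Equivalently, one can take the embedding of $S_{y,y'}$ and replace the drawn path by a curve for the edge $\{y,y'\}$, which keeps every node on the outer face. Therefore $E_1\cup\{\{y,y'\}\}$ is outerplanar. Then $E_1\cup\{\{y,y'\}\},E_2,\ldots,E_z$ is an edge partition of $Q+\{y,y'\}$ into $z$ outerplanar subgraphs. Hence $Q+\{y,y'\}$ has outerthickness at most $z$, contradicting that $Q$ is edge-maximal with outerthickness $z$ and $\{y,y'\}$ is a non-edge.

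The only delicate point is checking that the smoothing really produces exactly $E_1$ plus one new edge. This requires that the internal path nodes have no other incident edges in $S_{y,y'}$, which follows from disjointness, and that no multi-edge arises, which follows from $\{y,y'\}\notin E(Q)\supseteq E_1$. Once these are verified, both parts are routine.
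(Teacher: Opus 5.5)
Your proof is correct and follows the paper's argument: for (a) you attach the cycle (or pendant edge) at $y$ inside the outer face of an outerplanar drawing of $E_1$, and for (b) you smooth the degree-$2$ nodes of $P$ to make $E_1\cup\{\{y,y'\}\}$ outerplanar, which contradicts edge-maximality of $Q$. Your extra checks (the degenerate case $x=x'$, simplicity after smoothing, and why smoothing preserves outerplanarity) only make explicit what the paper leaves implicit.
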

\begin{proof}
Fix an outerplanar embedding of $Q$ in which all nodes, and in particular $y$, lie on the
outer face. Draw the path $P$ in the outer face without crossings, and then connect both end-nodes of
$P$ to $y$ by drawing the two edges within the outer face so that they intersect the embedding of $Q$
only at $y$. This yields an outerplanar embedding of $S_{y,y'}$, as illustrated in~\cref{fig:outerplanar_obs}.

If $y \neq y'$ and $\{y,y'\}\notin E(Q)$, suppose for a contradiction that $S_{y,y'}$ is outerplanar.
Every node of $P$ has degree $2$ in $S_{y,y'}$. Smoothing a degree-$2$ node in an outerplanar graph preserves
outerplanarity: given an outerplanar embedding in which all nodes lie on the outer face, smoothing such a node yields
an embedding in which all remaining nodes still lie on the outer face. Hence we may smooth all nodes of $P$ in $S_{y, y'}$ while
preserving the outerplanarity of $S_{y, y'}$, thereby replacing the union of path $P$ and $\{\{x,y\},\{x',y'\}\}$ in $S_{y, y'}$ by the single edge $\{y,y'\}$. It follows that
$Q\cup\{\{y,y'\}\}$ has outerthickness $z$. Since $\{y,y'\}\notin E(Q)$, this adds an edge to $Q$, contradicting the
assumption that $Q$ is edge-maximal with outerthickness $z$. Therefore, $S_{y,y'}$ is not outerplanar.
\end{proof}

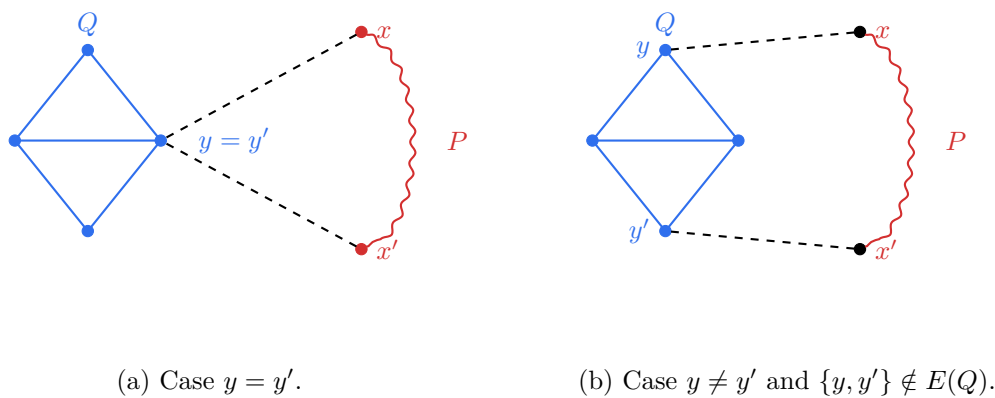
\begin{figure}[h]
    \centering
    \begin{tikzpicture}[
    vertex/.style={circle, draw, fill=black, inner sep=1.5pt, outer sep=0pt},
    edge/.style={thick, black},
    connedge/.style={thick, dashed, black},
    snakeedge/.style={thick, decorate, decoration={snake, amplitude=1pt, segment length=8pt}},
    qbox/.style={draw, thick, black, rounded corners=8pt, inner sep=12pt},
    label text/.style={font=\small},
    scale=0.8, transform shape=false
]

\begin{scope}[local bounding box=caseA]
    \coordinate (q_top_a) at (0, 1.5);
    \coordinate (q_left_a) at (-1.2, 0);
    \coordinate (q_right_a) at (1.2, 0);
    \coordinate (q_bottom_a) at (0, -1.5);

    \draw[edge, color=blue] (q_top_a) -- (q_left_a) -- (q_bottom_a) -- (q_right_a) -- (q_top_a);
    \draw[edge, color=blue] (q_left_a) -- (q_right_a);

    \node[vertex, color=blue] at (q_top_a) {};
    \node[vertex, color=blue] at (q_left_a) {};
    \node[vertex, color=blue, label={ [xshift=3mm]right:{\color{blue}$y=y'$}}] (Y_a) at (q_right_a) {};
    \node[vertex, color=blue] at (q_bottom_a) {};

    \node[anchor=south, inner sep=5pt, color=blue] at (q_top_a.south) {$Q$};

    \coordinate (p_x_a) at (4.5, 1.8);
    \coordinate (p_xp_a) at (4.5, -1.8);

    \node[vertex, color=red, label={right:\color{red}$x$}] (X_a) at (p_x_a) {};
    \node[vertex, color=red, label={right:\color{red}$x'$}] (XP_a) at (p_xp_a) {};
    
    \draw[snakeedge, color=red] (X_a) .. controls ++(1, -0.5) and ++(1, 0.5) .. (XP_a);
    
    \node[right=1cm, color=red] at ($(X_a)!0.5!(XP_a)$) {$P$};

    \draw[connedge] (X_a) -- (Y_a);
    \draw[connedge] (XP_a) -- (Y_a);

    \node[yshift=-2.8cm] at (2,-0.5) {(a) Case $y=y'$.};
\end{scope}

\begin{scope}[xshift=9.5cm, local bounding box=caseB]
    \coordinate (q_top_b) at (0, 1.5);
    \coordinate (q_left_b) at (-1.2, 0);
    \coordinate (q_right_b) at (1.2, 0);
    \coordinate (q_bottom_b) at (0, -1.5);

    \draw[edge,color=blue] (q_top_b) -- (q_left_b) -- (q_bottom_b) -- (q_right_b) -- (q_top_b);
    \draw[edge,color=blue] (q_left_b) -- (q_right_b);

    \node[vertex,color=blue, label={left:\color{blue}$y$}] (Y_b) at (q_top_b) {};
    \node[vertex,color=blue] at (q_left_b) {};
    \node[vertex,color=blue] at (q_right_b) {};
    \node[vertex,color=blue, label={left:\color{blue}$y'$}] (YP_b) at (q_bottom_b) {};

    \node[anchor=south, inner sep=5pt] at (q_top_b.south) {\color{blue}$Q$};

    \coordinate (p_x_b) at (3.2, 1.8);
    \coordinate (p_xp_b) at (3.2, -1.8);

    \node[vertex, label={right:\color{red}$x$}] (X_b) at (p_x_b) {};
    \node[vertex, label={right:\color{red}$x'$}] (XP_b) at (p_xp_b) {};
    
    \draw[snakeedge,color=red] (X_b) .. controls ++(1, -0.5) and ++(1, 0.5) .. (XP_b);
    
    \node[right=1cm,color=red] at ($(X_b)!0.5!(XP_b)$) {$P$};

    \draw[connedge] (X_b) -- (Y_b);
    \draw[connedge] (XP_b) -- (YP_b);

    \node[yshift=-2.8cm] at (2,-0.5) {(b) Case $y \ne y'$ and $\{y,y'\}\notin E(Q)$.};
\end{scope}

\end{tikzpicture}
    \caption{An illustration of $S_{y,y'}$ with $Q = K_4 -\{e\}$ and $z =1$.\label{fig:outerplanar_obs}}
\end{figure}

We are ready to prove the main lemma. 

\begin{lemma}\label{lem:kernel}
$G$ admits a 3-edge-coloring if and only if $G'$ has outerthickness at most $3$.
\end{lemma}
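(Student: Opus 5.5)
We prove the two directions separately. Throughout, fix an edge partition $E_1,E_2,E_3$ of $E(H)$ into outerplanar subgraphs, which exists because $H$ has outerthickness $3$.

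\emph{($\Rightarrow$)} Suppose $c\colon E(G)\to[3]$ is a proper $3$-edge-coloring. For each $i\in[3]$ we let $F_i$ consist of four kinds of edges:
\begin{itemize}
\item the part $E_i$ of $H$;
\item all edges $e\in E(G)$ with $c(e)=i$;
\item for each such $e=\{u,v\}$, the two connector edges $\{u,w_{\varphi(e)}\},\{v,w_{\varphi(e)}\}$.
\end{itemize}
This clearly partitions $E(G')$.

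Since $c$ is proper, the color class $c^{-1}(i)$ is a matching. So $F_i$ is $E_i$ together with vertex-disjoint triangles $\{u,v,w_{\varphi(e)}\}$ hung on nodes of $H$. Each triangle shares only its apex $w_{\varphi(e)}$ with $H$ and with other triangles. Hence $F_i$ is obtained from $E_i$ by repeated $1$-sums with triangles.

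By the same outer-face drawing as in \cref{obs:extrapath}(a), with the path being the single edge $\{u,v\}$, each such attachment preserves outerplanarity. Several triangles can share an apex $w_j$; we draw them one after another in the outer face around $w_j$. So every $F_i$ is outerplanar.

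\emph{($\Leftarrow$)} Suppose $G'$ has an edge partition $F_1,F_2,F_3$ into outerplanar subgraphs. Define $c(e)=i$ if $e\in F_i$, for $e\in E(G)$. We must show $c$ is proper.

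First, restricted to $E(H)$, the sets $F_i\cap E(H)$ give a partition witnessing outerthickness $3$ of $H$. So \cref{obs:extrapath}(b) applies to that partition; after relabeling, take $E_i=F_i\cap E(H)$.

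The key step is to show that for each $e=\{u,v\}\in E(G)$ the triangle $\{u,v,w_{\varphi(e)}\}$ is monochromatic. We cannot argue this triangle by triangle, since the connector edges may be split across classes. Instead we work with paths that leave $H$ and return. Fix a color $i$ and consider the subgraph $F_i\setminus E(H)$, which lives on $V(G)$ plus the nodes $w_j$. Any path in it that starts at some $w_j$, travels through nodes of $G$ only, and ends at $w_{j'}$ with $j\neq j'$ gives a configuration $S_{w_j,w_{j'}}$ as in \cref{obs:extrapath}. Since $w_j,w_{j'}$ are independent in $H$, that configuration is not outerplanar, and so neither is its supergraph $F_i$.

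So in each color, every component of $F_i\setminus E(H)$ touches at most one $w_j$. Now take an edge $e=\{u,v\}$ with $\varphi(e)=j$ and look at the connector edges at $u$. Node $u$ has exactly three connectors, to three distinct labels, since the labels of edges at a node are distinct. It also has its three $G$-edges.

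Suppose $e$ has color $i$ but the connector $\{u,w_j\}$ has color $i'\neq i$. We then try to trace, within a single color, a path from $w_j$ via $u$ to a different $w_{j'}$. The candidate routes use $u$'s other connectors and edges, or go along $e$ to $v$ and out via $v$'s connectors. The distance-$3$ label condition on $\varphi$ guarantees that any two such exits land on different $w$'s; it ensures that along paths of length at most $3$ in $G$ all attached $w$'s are distinct.

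\emph{Main obstacle.} The real difficulty is this counting/pigeonhole step. Each node of $G$ has $6$ incident edges in $G'$ split among $3$ colors. We need to show that the only way to avoid a bichromatic-exit path (two distinct $w$'s reached in one color) is when each triangle is monochromatic and the three triangles at $u$ get distinct colors. I would first try a local argument: suppose some color class at $u$ has no two $u$-incident edges. Then by counting, another color has at least three edges at $u$, and using the neighbors' connectors within distance $3$ this produces the forbidden path.

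Once triangles are monochromatic, two triangles of the same color sharing $u$ give the path $w_j\,u\,w_{j'}$ with $j\ne j'$ in one color. That path is forbidden, so $c$ is proper.
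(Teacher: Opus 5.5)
There is a genuine gap in the backward direction. Your forward direction and your closing step are fine. Your reformulation of \cref{obs:extrapath}(b) is also correct, and it is exactly how the paper uses the observation: no single color class may contain a path that starts at some $w_j$, passes only through nodes of $G$, and ends at a different $w_{j'}$.

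The gap is the step you flag yourself: you never prove that each triangle $\{u,v,w_{\varphi(e)}\}$ is monochromatic. Your pigeonhole sketch over the six edges at $u$ is not carried out. The route you describe, starting at $w_j$ in the color $i'$ of the connector $\{u,w_j\}$, need not exist, since $u$ may have no other edge of color $i'$. Also, your claim that ``any two exits land on different $w$'s'' is not quite right. Exits from $u$ and from $v$ coincide at $w_{\varphi(e)}$, and this is precisely what the argument must exploit.

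The missing idea is to apply your forbidden-path statement first to the shortest possible path, $w_{u,j}\,u\,w_{u,j'}$, through the single node $u$.
\begin{itemize}
\item The three connectors at $u$ go to three distinct, pairwise non-adjacent $w$'s, so no two of them share a color. With three colors, each color class contains exactly one connector at $u$. This is the paper's \cref{clm:distinct}.
\item Now let $e=\{u,v\}$ have color $i$, and let $\{u,w\}$ and $\{v,w'\}$ be the unique color-$i$ connectors at $u$ and $v$. Then $w\,u\,v\,w'$ is a color-$i$ path, so $w=w'$.
\item Since $W_u\cap W_v=\{w_{\varphi(e)}\}$, we get $w=w'=w_{\varphi(e)}$. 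This intersection is where the length-$3$ condition on $\varphi$ enters. The conclusion is the paper's \cref{clm:c3}.
\end{itemize}
No counting of the six edges at $u$ is needed. With these two steps inserted, your final paragraph completes the proof, along the same lines as the paper.
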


\begin{proof}
If $G$ admits a $3$-edge-coloring, then $E(G)$ can be partitioned into three matchings $M_1,M_2,$ and $M_3$. Fix $i\in[3]$ and $\ell\in\mathcal{L}$. Then the edges $e\coloneqq\{u,v\}\in M_i$ with $\varphi(e)=\ell$ give rise to a collection $T_{i, \ell}$ of node-disjoint triangles, each on nodes $\{u,v,w_\ell\}$, where the only node shared among these triangles is the common node $w_\ell$. Let $H$ have an edge partition into three outerplanar subgraphs $H_1,H_2, \mbox{ and }H_3$. By~\cref{obs:extrapath}(a), for each $i \in [3]$, the subgraph 
\[
H_i \cup \bigcup_{\ell \in \mathcal{L}} T_{i, \ell}
\]
is outerplanar. Thus, $G'$ has outerthickness at most $3$.

Otherwise, $G$ does not admit a $3$-edge-coloring. Suppose $E(G')$ can be partitioned into three outerplanar subgraphs $E_i$ for $i \in [3]$, the following claims hold.
\begin{claim}\label{clm:p3}
There exists some $i \in [3]$, some path of length $2$ in $G$ is entirely within an outerplanar subgraph $E_i$ for some $i\in[3]$.
\end{claim}
\begin{claimproof}
Now $E(G')$ is partitioned into three outerplanar subgraphs $E_i$ for $i \in [3]$. Restricting this partition to $E(G) \subseteq E(G')$, the induced parts $E(G)\cap E_i$
for $i\in[3]$ cannot all be matchings; otherwise, they would yield a $3$-edge-coloring of $G$.
Therefore, for some $i\in[3]$, the subgraph $E_i$ contains a path of length $2$ in $G$.
\end{claimproof}

\begin{claim}\label{clm:distinct}
For each node $v \in V(G)$, let $w_{v, j}$ for $j \in [3]$ be neighbor nodes of $v$ in $H$. Then, the edges $\{v, w_{v, j}\}$ for $j \in [3]$ have to be assigned to pairwise distinct outerplanar subgraphs $E_i$ for $i \in [3]$.
\end{claim}
\begin{claimproof}
The nodes $w_{v,j}$ for $j\in[3]$ are pairwise distinct, since the three edges of $G$ incident to $v$
receive distinct labels in $\mathcal{L}$ by the construction of $\varphi$. Recall that $w_1,w_2,\ldots,w_{\alpha}$ (and hence $w_{v,j}$
for $j\in[3]$) form an independent set.

Suppose that, for some $i\in[3]$, the subgraph $(V(G),E_i)$ contains at least two edges from
$\{\{v,w_{v,j}\}: j\in[3]\}$, say without loss of generality $\{v,w_{v,1}\}$ and $\{v,w_{v,2}\}$.
Let
\[
E' \coloneqq (E(H)\cap E_i)\cup\{\{v,w_{v,j}\}: j\in[2]\}.
\]
By \cref{obs:extrapath}(b), the graph $(V(H)\cup\{v\},E')$ is not outerplanar: apply the observation with
$(Q,E_1,P,y,y')_{\mathrm{\cref{obs:extrapath}}}=(H,E_i,(v),w_{v,1},w_{v,2})_{\mathrm{here}}$. On the other hand, $E'$ is a subgraph of the
outerplanar graph $E_i$, a contradiction.

Consequently, the three edges $\{v,w_{v,j}\}$ for $j \in [3]$ must be assigned to three distinct
outerplanar subgraphs $E_i$ for $i \in [3]$.
\end{claimproof}

\begin{claim}\label{clm:c3}
For each edge $e \coloneqq \{x, y\} \in E(G)$, the three edges $\{x, w_{\varphi(e)}\}$, $\{y, w_{\varphi(e)}\}$, $\{x, y\}$ have to be assigned to the same outerplanar subgraph $E_i$ for some $i \in [3]$.
\end{claim}
\begin{claimproof}
Since $E_1,E_2,E_3$ form a partition of $E(G)$, we may assume w.l.o.g.\ that $\{x,y\}\in E_1$.
Let
\[
W_x \coloneqq \{w_{x,1},\, w_{x,2},\, w_{\varphi(e)}\}
\qquad\text{and}\qquad
W_y \coloneqq \{w_{y,1},\, w_{y,2},\, w_{\varphi(e)}\}
\]
denote the sets of nodes in $H$ that are adjacent to $x$ and $y$ in $G'$, respectively.
By the construction of $\varphi$, each of $W_x$ and $W_y$ consists of three distinct nodes,
and $W_x\cap W_y=\{w_{\varphi(e)}\}$.
By \cref{clm:distinct}, there exist edges $\{x,w\}$ with $w\in W_x$ and $\{y,w'\}$ with $w'\in W_y$
that both lie in $E_1$.

If $w\neq w'$, then by \cref{obs:extrapath}(b) the subgraph induced by the union of the path $P \coloneqq (w, x, y, w')$  and $E_1 \cap E(H)$ is not outerplanar: apply the observation with
$(Q,E_1,P,y,y')_{\mathrm{\cref{obs:extrapath}}}=(H,E_1,(x, y),w,w')_{\mathrm{here}}$.

Therefore $w=w'=w_{\varphi(e)}$, or equivalently, the three edges
$\{x,w_{\varphi(e)}\}$, $\{y,w_{\varphi(e)}\}$, and $\{x,y\}$ are all assigned to $E_1$.
\end{claimproof}

By~\cref{clm:p3}, we assume w.l.o.g. that $E_1$ contains the path of length $2$, denoted by $(a, b, c)$ with $a, b, c \in V(G)$ and $e_1 \coloneqq \{a, b\}, e_2 \coloneqq \{b, c\} \in E(G)$. By~\cref{clm:c3}, $E_1$ contains the four edges $\{a, w_{\varphi(e_1)}\}$, $\{b, w_{\varphi(e_1)}\}$, $\{b, w_{\varphi(e_2)}\}$, $\{c, w_{\varphi(e_2)}\}$ as well. However, by \cref{clm:distinct}, $\{b, w_{\varphi(e_1)}\}$ and $\{b, w_{\varphi(e_2)}\}$ have to be assigned to different $E_i$'s, a contradiction.

As a result, $E(G')$ cannot be partitioned into three outerplanar subgraphs $E_i$ for $i \in [3]$ or, equivalently, $G'$ has outerthickness greater than $3$. This completes the proof.
\end{proof}

\subsection{Proofs of Deferred Claims}


\begin{proposition}\label{pro:varphi}
Fix an integer $k \ge 3$. Let $G$ be a $k$-regular graph. There exists a labeling function $\varphi: E(G) \to \mathcal{L}$ with $|\mathcal{L}| \le 2k(k-1)+1$, computable in polynomial time, such that the edges on any path in $G$ of length at most $3$ receive pairwise distinct labels.
\end{proposition}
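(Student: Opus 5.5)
The plan is a greedy coloring of an auxiliary conflict graph. Define the graph $\Gamma$ on node set $E(G)$, where two distinct edges $e,f\in E(G)$ are adjacent in $\Gamma$ whenever they both lie on some path of length at most $3$ in $G$. A labeling $\varphi$ with the required property is exactly a proper node coloring of $\Gamma$: the edges of any path of length at most $3$ are pairwise adjacent in $\Gamma$, so they receive pairwise distinct colors. Conversely, a proper coloring of $\Gamma$ only needs to separate pairs that co-occur on such a path.

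The key step is to bound the maximum degree of $\Gamma$ by $2k(k-1)$. Then the standard greedy algorithm, which processes $E(G)$ in any order and gives each edge the smallest label not used by its already-labeled $\Gamma$-neighbors, uses at most $\Delta(\Gamma)+1\le 2k(k-1)+1$ labels. It runs in polynomial time, since $\Gamma$ has $|E(G)|$ nodes and its adjacencies can be listed by brute force over all paths of length at most $3$. There are $O(|E(G)|\cdot k^2)$ such paths.

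For the degree bound, fix $e=\{u,v\}$. If $e$ and $f$ lie on a common path of length at most $3$, then $f$ is either adjacent to $e$ or separated from it by exactly one middle edge.

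\begin{itemize}
\item \emph{Adjacent edges.} The edges sharing an endpoint with $e$ number $2(k-1)$, namely $k-1$ at $u$ and $k-1$ at $v$.
\item \emph{Edges at distance one.} Here $f$ is not incident to $u$ or $v$, and the path is $f,g,e$ or $e,g,f$ with $g$ adjacent to $e$. Choose the middle edge $g$, which is one of the $2(k-1)$ edges incident to $u$ or $v$ other than $e$. Its far endpoint $x$ has $k-1$ further incident edges $f$. This gives at most $2(k-1)(k-1)$ candidates.
\end{itemize}

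Summing the two cases gives $\deg_\Gamma(e)\le 2(k-1)+2(k-1)^2=2k(k-1)$, which is the intended bound.

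I expect the main care point to be the case analysis. One must make sure every pair of edges on a common path of length at most $3$ falls into one of the two cases, so that no pair at "distance two" in the line graph is counted. This holds because on a path $e_1e_2e_3$ the edges $e_1$ and $e_3$ are joined through $e_2$. One must also make sure the count is an upper bound even when edges coincide, for example when $x\in\{u,v\}$ or when triangles are present. Overcounting only helps, so these coincidences are harmless.
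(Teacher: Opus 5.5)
Your proposal is correct and is essentially the paper's argument: bound the number of edges that can share a path of length at most $3$ with a fixed edge $e=\{u,v\}$ by $2(k-1)+2(k-1)^2=2k(k-1)$, then label greedily from $2k(k-1)+1$ labels. Your case split via the middle edge $g$ is slightly more careful than the paper's wording, which speaks of ``the four neighbors of $u$ and $v$''; that count is specific to $k=3$, and in general there are $2(k-1)$ such neighbors.
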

\begin{proof}
In a $k$-regular graph $G$, fix an edge $e=\{u,v\}$. Any other edge $f\neq e$ that can appear together with $e$ on a path of length at most $3$ must either be incident to $u$ or $v$ (at most $2(k-1)$ choices), or be incident to one of the four neighbors of $u$ and $v$ (at most $2(k-1)^2$ additional choices). Hence, for each edge $e$, the number of edges $f\neq e$ that can be contained with $e$ in some path of length at most $3$ is at most $2k(k-1)$. Therefore, we can greedily assign to each edge $e$ a label from $\{1,2,\ldots,2k(k-1)+1\}$ so that no such edge $f$ receives the same label as $e$; since at most $2k(k-1)$ labels are forbidden at each step, at least one label remains available. The running time of the above greedy labeling is linear in the input size.
\end{proof}

\begin{proposition}\label{pro:H}
    For each integer $|\mathcal{L}| \ge 1$, there exists a constant $C$ such that every $C$-node graph $H$ that is edge-maximal with outerthickness $3$ contains an independent set of size at least $\alpha \ge |\mathcal{L}|$. Moreover, such a graph $H$ and an independent set of size $\alpha$ in $H$ can be computed in constant time.
\end{proposition}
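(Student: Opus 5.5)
The plan has two parts: show that every sufficiently large graph of outerthickness $3$ has a large independent set, and then find a concrete edge-maximal such graph by exhaustive search.

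\textbf{Independent sets via degeneracy.} Every outerplanar graph on $n$ nodes has at most $2n-3$ edges. Hence a graph $H$ with outerthickness $3$ on $C$ nodes has at most $3(2C-3) < 6C$ edges. The same bound holds for every subgraph, since outerthickness is monotone under taking subgraphs. So every subgraph has a node of degree at most $5$, i.e., $H$ is $5$-degenerate.

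Greedy coloring along a degeneracy ordering then properly colors $H$ with $6$ colors. The largest color class is an independent set of size at least $\lceil C/6\rceil$. Choosing $C \coloneqq 6|\mathcal{L}|$ therefore guarantees an independent set of size $\alpha \ge |\mathcal{L}|$ in \emph{every} $C$-node graph of outerthickness $3$. In particular this holds for edge-maximal ones, which is the first assertion. Note that $C$ depends only on $|\mathcal{L}|$, and $|\mathcal{L}| \le 2k(k-1)+1$ is a constant by \cref{pro:varphi}.

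\textbf{Existence of an edge-maximal $H$.} Start from the empty graph on $C$ nodes, which has outerthickness $0 \le 3$. Repeatedly add any non-edge whose addition keeps the outerthickness at most $3$, until no such edge exists. The resulting graph is edge-maximal with respect to outerthickness at most $3$.

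To make its outerthickness exactly $3$, I will choose $C$ large enough that $K_C$ has outerthickness greater than $3$. Counting edges suffices: $\binom{C}{2} > 3(2C-3)$ once $C \ge 12$, so I take $C \coloneqq \max\{6|\mathcal{L}|, 12\}$. The maximal graph cannot have outerthickness $\le 2$: if it did, adding any single missing edge as its own part would give outerthickness $\le 3$, contradicting maximality. And since $K_C$ has outerthickness above $3$, the graph is not complete, so such a missing edge exists. Hence its outerthickness is exactly $3$, as the construction in \cref{sec:kernel} requires. The main subtlety is exactly this alignment of "edge-maximal with outerthickness $3$" with "edge-maximal with respect to outerthickness $\le 3$", and the argument just given settles it.

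\textbf{Constant-time computability.} Every step runs on a graph with $C = O(1)$ nodes. Each test "outerthickness $\le 3$?" can be decided by brute force: enumerate all $3^{|E|}$ assignments of edges to three parts and run the outerplanarity test of Mitchell~\cite{Mitchell79} on each part. The number of edge additions is at most $\binom{C}{2}$. Finally, compute a maximum independent set by brute force over all $2^C$ node subsets; by the degeneracy argument it has size at least $|\mathcal{L}|$. All of these bounds depend only on $k$, so the total time is constant with respect to the input $G$. Moreover, an edge partition of $H$ into three outerplanar parts $H_1,H_2,H_3$, which the forward direction of \cref{lem:kernel} uses, is produced as a by-product of the final feasibility check.
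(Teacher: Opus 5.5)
\textbf{The degeneracy step is wrong, and your constant $C$ is too small as a result.} The bound $|E(H')|\le 3(2n'-3)<6n'$ only gives average degree $2|E(H')|/n'<12$. Hence it yields a node of degree at most $11$, not $5$; you have conflated edges per node with degree. The claim ``every graph of outerthickness at most $3$ is $5$-degenerate'' is false. $K_7$ is the union of three Hamiltonian cycles (Walecki's decomposition), so it has outerthickness at most $3$, yet it is $6$-regular.

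This breaks the statement, not just the proof. Take $|\mathcal{L}|=7$, so your choice is $C=42$. Six disjoint copies of $K_7$ have outerthickness at most $3$: put the $j$-th Hamiltonian cycle of every copy into part $j$. Their independence number is $6$. Extending this graph greedily, exactly as in your second step, gives a $42$-node graph that is edge-maximal with outerthickness exactly $3$. Its independence number is still at most $6<|\mathcal{L}|$, so the first assertion fails for $C=\max\{6|\mathcal{L}|,12\}$.

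The repair is immediate. Every subgraph on $n'\ge 2$ nodes has at most $6n'-9$ edges, so $H$ is $11$-degenerate. Greedy coloring then uses $12$ colors, giving $\alpha\ge\lceil C/12\rceil$. Taking $C\coloneqq 12|\mathcal{L}|$, which is automatically at least $12$, makes the rest of your proposal go through unchanged.

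With that fix, your argument is essentially the paper's. Both use the edge bound $3(2C-3)$, the condition $\binom{C}{2}>3(2C-3)$, the observation that one extra edge raises outerthickness by at most one (to get outerthickness exactly $3$), and brute force for constant time. The differences are minor:
\begin{itemize}
\item The paper takes $H$ to be a maximum-edge subgraph of $K_C$ with outerthickness at most $3$, rather than a greedily built inclusion-maximal one.
\item The paper applies the Caro--Wei bound $\lceil C^2/(C+2m)\rceil$ to $H$ itself, needing $C>13|\mathcal{L}|$. This uses only the edge count of $H$, so the bound need not hold for all subgraphs. Your degeneracy route needs the bound for every subgraph, which you correctly get from monotonicity, and it gives the slightly better constant $12$.
\end{itemize}
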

\begin{proof}
Fix $|\mathcal{L}| \ge 1$ and choose a constant $C$ such that
\begin{equation}\label{eqn:complete}
  \binom{C}{2} > 3(2C-3).
\end{equation}
Every $C$-node outerplanar graph has at most $2C-3$ edges, and hence any graph of outerthickness at most $3$ has at most $3(2C-3)$ edges. Choosing $C$ so that $\binom{C}{2} > 3(2C-3)$ implies that $K_C$ has outerthickness greater than $3$. Let $H$ be a subgraph of $K_C$ with the maximum number of edges among all subgraphs of $K_C$ whose outerthickness is at most $3$; such an $H$ exists since the empty graph has outerthickness $0$. By maximality, adding any edge from $E(K_C)\setminus E(H)$ to $H$ yields a graph of outerthickness greater than $3$. Moreover, adding a single edge can increase outerthickness by at most $1$. Therefore, $H$ is edge-maximal with outerthickness (exactly) $3$.

By the Caro--Wei bound~\cite{Caro79,Wei81}, every $C$-node $m$-edge graph has an independent set of size at least
$\left\lceil \frac{C^2}{C+2m}\right\rceil$.
Since $H$ has $m \le 3(2C-3)$ edges for our choice of $C$, we obtain that $H$ has an independent set of size at least
\begin{equation}\label{eqn:cw}
  \left\lceil \frac{C^2}{C+6(2C-3)}\right\rceil > |\mathcal{L}| \mbox{ for some sufficiently large } C.
\end{equation}
If we set $C > 13 |\mathcal{L}|$, then \cref{eqn:complete,eqn:cw} both hold. 

Finally, since $C$ is a constant, we can compute such an $H$ and an independent set of size at least $|\mathcal{L}|$ in constant time by brute force, using an outerplanarity testing algorithm~\cite{Mitchell79}.
\end{proof}


\section{NP-hardness of $\mathcal{F}$-Thickness}
\label{sec:main}

We generalize the NP-hardness proof for outerthickness in \cref{sec:kernel} to every proper graph class $\mathcal{F}$ whose membership is decidable and that satisfies all of the following conditions:
\begin{enumerate}[(a)]
    \item $\mathcal{F}$ is closed under topological minors,
    \item $\mathcal{F}$ is closed under $1$-sums, and
    \item $\mathcal{F}$ contains a  cycle of length $3$.  
\end{enumerate}

By Condition~(a), $\mathcal{F}$ is monotone. Thus, any covering of the edge set of an input graph for the problem $P_{\mathcal{F}}$ can be converted into an edge partition by deleting overlaps; hence, in this setting, covering and partitioning are equivalent. \textbf{Thus, in this section, we treat $P_{\mathcal{F}}$ as a partition problem:} Fix a proper graph class $\mathcal{F}$ whose membership can be decided by algorithm $\mathcal{M}_{\mathcal{F}}$.
We define the \emph{$\mathcal{F}$-thickness} of an undirected simple graph $G$, denoted by $\theta_{\mathcal{F}}(G)$, to be the minimum integer $k$ such that $E(G)$ admits a partition into at most $k$ subsets $E_1, E_2, \dots, E_k$ with $(V(G), E_i) \in \mathcal{F}$ for all $i \in [k]$. Thus, the problem $P_{\mathcal{F}}$ asks, given an undirected simple graph $G$ and an integer $k$, whether $\theta_{\mathcal{F}}(G) \le k$.

Our NP-hardness reduction proceeds as follows. Fix an integer $k \ge 3$. Given an input instance $G$ of \textsc{Edge-Coloring} on $k$-regular graphs, defined in~\cref{thm:edgecolor}. We compute the following label function $\mathcal{L}$ and an auxiliary graph $H$ in polynomial time. Then, we obtain a supergraph $G'$ from joining $G$ and $H$ using label function $\mathcal{L}$ such that $G$ admits a $k$-edge-coloring if and only if $G'$ has $\theta_\mathcal{F}(G') \le k$ (\cref{lem:general}). This proves the NP-hardness claimed in \cref{thm:main} for every fixed integer $k \ge 3$.

\begin{itemize}
    \item Let $\varphi: E(G)\to \mathcal{L}$ be a labeling function, where $\mathcal{L} = \{1, 2, \ldots, |\mathcal{L}|\}$ is a constant-size (not necessarily minimum-size) label set, such that for every path in $G$ of length at most $3$, the edges on the path receive pairwise distinct labels under $\varphi$. The existence of such a function $\varphi$, together with a construction running in time polynomial in the input size of $G$, is guaranteed by \cref{pro:varphi}. The resulting label set $\mathcal{L}$ satisfies $|\mathcal{L}| = 2k(k-1)+1 = O(1)$, as desired.

\medskip

    \item We pick a sufficiently large constant $C$ such that every $C$-node graph with $\mathcal{F}$-thickness $k$ contains an (not necessarily maximum) independent set of $\alpha \ge |\mathcal{L}|$ nodes. Then we construct a $C$-node graph $H$ that is edge-maximal with $\mathcal{F}$-thickness $k$. Given the promise of $C$, let $w_1, w_2, \ldots, w_\alpha$ be the $\alpha$ nodes in an independent set in $H$. The existence of $H$, as well as a construction running in time polynomial in the input size of $G$, is shown in \cref{lem:sparse}. 
\end{itemize}
    
    \begin{lemma}\label{lem:sparse}
    For each integer $|\mathcal{L}| \ge 1$, there exists a constant $C$ such that every $C$-node graph $H$ that is edge-maximal with $\mathcal{F}$-thickness $k$ contains an independent set of at least $\alpha \ge |\mathcal{L}|$ nodes. Such an $H$ and an independent set in $H$ of $\alpha$ nodes can be computed in constant time.
    \end{lemma}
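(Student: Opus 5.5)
We follow the template of \cref{pro:H}: bound the number of edges of any graph in $\mathcal{F}$ linearly in its number of nodes, deduce that $K_C$ has $\mathcal{F}$-thickness larger than $k$ for large $C$, take an edge-maximum subgraph $H$ of $K_C$ of $\mathcal{F}$-thickness at most $k$, and apply the Caro--Wei bound. The one genuinely new ingredient is the linear edge bound. In \cref{pro:H} it came from outerplanarity, so here it has to be derived from the abstract conditions (a)--(c) together with properness.

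\textbf{Step 1: a linear edge bound.} Since $\mathcal{F}$ is proper, some graph $X\notin\mathcal{F}$ exists. If $X$ has $t$ nodes, then $X\subseteq K_t$, and $K_t\notin\mathcal{F}$ because $\mathcal{F}$ is monotone by condition~(a). Hence every graph in $\mathcal{F}$ excludes $K_t$ as a topological minor. By the theorem of Bollob\'as--Thomason and Koml\'os--Szemer\'edi (or already Mader's earlier bound), a graph with no $K_t$-subdivision has at most $c_t\, n$ edges, where $c_t=O(t^2)$ is a constant depending only on $t$. So every $n$-node graph in $\mathcal{F}$ has at most $c_t n$ edges, and every $n$-node graph of $\mathcal{F}$-thickness at most $k$ has at most $k c_t n$ edges. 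Conditions (b) and (c) are not needed for this lemma; they only guarantee that the construction used in \cref{lem:general} is meaningful. I expect this step to be the main obstacle, because it is the only place where the abstract class has to be controlled. The fallback is to invoke Mader's theorem with any explicit constant, since only the existence of some constant $c_t$ matters.

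\textbf{Step 2: choosing $C$ and $H$.} Choose $C$ with $\binom{C}{2} > k c_t C$; then $\theta_{\mathcal{F}}(K_C)>k$. Among all subgraphs of $K_C$ with $\mathcal{F}$-thickness at most $k$, let $H$ be one with the maximum number of edges. By maximality, adding any missing edge pushes the thickness above $k$. Adding a single edge raises the thickness by at most one, since the new edge can form its own part: $K_2\in\mathcal{F}$ by (a) and (c), and isolated nodes are harmless because $K_1\in\mathcal{F}$ and $\mathcal{F}$ is closed under $1$-sums. Consequently $\theta_{\mathcal{F}}(H)=k$ exactly, and $H$ is edge-maximal with $\mathcal{F}$-thickness $k$.

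\textbf{Step 3: the independent set and computability.} Every $C$-node graph $H$ that is edge-maximal with $\mathcal{F}$-thickness $k$ has $m\le kc_tC$ edges. By the Caro--Wei bound it therefore has an independent set of size at least $\lceil C^2/(C+2kc_tC)\rceil = \lceil C/(1+2kc_t)\rceil$. Choosing $C>(1+2kc_t)|\mathcal{L}|$, in addition to the choice in Step 2, makes this at least $|\mathcal{L}|$. Since $k$, $t$ and $|\mathcal{L}|$ are constants, so is $C$. We can then find $H$ and the independent set by brute force over all edge subsets of $K_C$ and all partitions into $k$ parts, testing membership in $\mathcal{F}$ with the decider $\mathcal{M}_{\mathcal{F}}$ on constant-size inputs. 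This takes constant time, though the constant depends on $\mathcal{M}_{\mathcal{F}}$. Finally, $t$ itself is fixed by $\mathcal{F}$, so it can be hardwired; it is not computed by the reduction.
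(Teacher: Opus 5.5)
Your proposal is correct and follows essentially the same route as the paper: properness and (a) exclude some $K_t$, and the Bollob\'as--Thomason bound gives every part linearly many edges. An edge-maximum subgraph of $K_C$ with $\mathcal{F}$-thickness at most $k$ then has thickness exactly $k$, since a new edge can form its own part, and Caro--Wei plus brute force finish. One minor slip: $1$-summing copies of $K_1$ never creates isolated nodes, so instead pad the single-edge part by repeatedly $1$-summing the edgeless two-node graph (a subgraph of $C_3$, hence in $\mathcal{F}$ by (a)) onto $K_2$ plus an isolated node --- which also shows that (b) and (c) are in fact used in your Step~2, contrary to your remark in Step~1.
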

    \begin{proof}
    Since $\mathcal{F}$ is proper and closed under topological minors, there exists an integer $t$ such that $K_t \notin \mathcal{F}$. By a theorem of Bollob\'as and Thomason~\cite{BollobasThomason1998}, there is a constant $D_t$ such that every graph $F$ with more than $D_t|V(F)|$ edges contains a subdivision of $K_t$, and hence does not belong to $\mathcal{F}$. Therefore, every $F \in \mathcal{F}$ satisfies $|E(F)| \le D_t|V(F)|$.

    Fix $|\mathcal{L}| \ge 1$ and choose a constant $C$ such that
    \begin{equation}\label{eqn:kt}
        \binom{C}{2} > k(D_t C).
    \end{equation}

\begin{claim}\label{clm:plusone}
For every $F \in \mathcal{F}$ and $e \notin E(F)$, $\theta_{\mathcal{F}}(F \cup \{e\}) \le \theta_{\mathcal{F}}(F)+1$.
\end{claim}
\begin{claimproof}
Since $C_3 \in \mathcal{F}$ (Condition~(c)) and $\mathcal{F}$ is closed under topological minors (Condition~(a)), we have $P_2 \in \mathcal{F}$. Thus, we have $\theta_{\mathcal{F}}(F \cup \{e\}) \le \theta_{\mathcal{F}}(F) + \theta_{\mathcal{F}}(\{e\})
\le \theta_{\mathcal{F}}(F) + 1$.
\end{claimproof}

    Since $\theta_{\mathcal{F}}(\emptyset)=0$, the family
    \[
    \mathcal{S}\coloneqq \{S \subseteq K_C : \theta_{\mathcal{F}}(S)\le k\}
    \]
    is nonempty. Let $S^*$ be a member of $\mathcal{S}$ with maximum number of edges.
    Thus $S^*$ is edge-maximal with $\mathcal{F}$-thickness at most $k$. To see why, for every edge
    $e\in E(K_C)\setminus E(S^*)$ we must have $\theta_{\mathcal{F}}(S^* \cup \{e\})>k$,
   otherwise $S^*\cup \{e\}\in\mathcal{S}$ contradicts the choice of $S^*$.  Moreover, by~\cref{clm:plusone}, we have $\theta_{\mathcal{F}}(S^*) = k$.

By the Caro--Wei bound~\cite{Caro79,Wei81}, every $C$-node $m$-edge graph has an independent set of size at least
$\left\lceil \frac{C^2}{C+2m}\right\rceil$.
Since $H$ has $m \le k(D_t C)$ edges for our choice of $C$, we obtain that $H$ has an independent set of size at least
\begin{equation}\label{eqn:cw2}
  \left\lceil \frac{C^2}{C+2k(D_t C))}\right\rceil > |\mathcal{L}| \mbox{ for some sufficiently large } C.
\end{equation}
If we set $C > (1+2kD_t) |\mathcal{L}|$, then \cref{eqn:kt,eqn:cw2} both hold.

Finally, since $C$ is a constant, we can compute such an $H$ and an independent set of size at least $|\mathcal{L}|$ in constant time by brute force, using the algorithm $\mathcal{M}_{\mathcal{F}}$ to decide the membership in $\mathcal{F}$. 
\end{proof}

\begin{observation}\label{obs:extrapath-general}
Fix an integer $z \ge 1$. Let $Q$ be a graph that is edge-maximal with $\mathcal{F}$-thickness $z$, and let $E_1, E_2, \ldots, E_z$ be an edge-partition of $E(Q)$ into $z$  subgraphs in $\mathcal{F}$.
Let $P$ be a path with endpoints $x$ and $x'$
such that $V(P)\cap V(Q)=\varnothing$. For any nodes $y, y'\in V(Q)$, define
\[
S_{y, y'} \coloneqq E_1 \cup P \cup \bigl\{\{x,y\},\{x',y'\}\bigr\}.
\]

If $y\ne y'$ and $\{y,y'\}\notin E(Q)$, then $S_{y,y'}$ does not belong to $\mathcal{F}$.
\end{observation}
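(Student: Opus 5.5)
We mirror the proof of \cref{obs:extrapath}(b), replacing outerplanarity by membership in $\mathcal{F}$ and using Condition~(a) in place of the smoothing argument. Suppose, for a contradiction, that $y\ne y'$, $\{y,y'\}\notin E(Q)$, and $S_{y,y'}\in\mathcal{F}$.

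The first step is to note that in $S_{y,y'}$ every node of $P$ has degree exactly $2$. Indeed, $V(P)\cap V(Q)=\varnothing$, so the only edges at nodes of $P$ are the path edges together with $\{x,y\}$ and $\{x',y'\}$. This also holds in the degenerate case $x=x'$ (a single-node path), because $y\ne y'$ means the two added edges are distinct. Hence the union of $P$ with $\{x,y\}$ and $\{x',y'\}$ is a subdivided $y$--$y'$ path whose internal nodes are exactly $V(P)$.

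The second step is to smooth all nodes of $P$. This yields the graph $E_1\cup\{\{y,y'\}\}$ on $V(Q)$, which is simple because $\{y,y'\}\notin E(Q)\supseteq E_1$. That graph is therefore a topological minor of $S_{y,y'}$: a subdivision of it (namely $S_{y,y'}$ itself) is a subgraph of $S_{y,y'}$. By Condition~(a), $E_1\cup\{\{y,y'\}\}\in\mathcal{F}$. If isolated nodes matter for the spanning convention $(V(Q),\cdot)$, we use that the class is closed under taking subgraphs containing the relevant vertex set. Monotonicity follows from~(a), and adding isolated vertices of $Q$ is harmless since $E_1$ already lives on $V(Q)$.

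The third step is to conclude. The sets $E_1\cup\{\{y,y'\}\},E_2,\ldots,E_z$ form a partition of $E(Q)\cup\{\{y,y'\}\}$ into $z$ parts, each inducing a member of $\mathcal{F}$. Hence $\theta_{\mathcal{F}}(Q+\{y,y'\})\le z$. This contradicts the assumption that $Q$ is edge-maximal with $\mathcal{F}$-thickness $z$, since $y,y'$ are non-adjacent in $Q$.

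The only delicate point is the bookkeeping in the second step. We must check that the smoothing is legitimate, i.e.\ that each node of $P$ really has degree $2$ and that no parallel edge is created. Both follow directly from the hypotheses $V(P)\cap V(Q)=\varnothing$ and $\{y,y'\}\notin E(Q)$.
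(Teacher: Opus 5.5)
Your proof is correct and follows the paper's argument. You assume $S_{y,y'}\in\mathcal{F}$, smooth the degree-$2$ nodes of $P$ via Condition~(a) to get $E_1\cup\{\{y,y'\}\}\in\mathcal{F}$, and contradict the edge-maximality of $Q$; your extra bookkeeping (the single-node case $x=x'$, simplicity of the smoothed graph, and the explicit partition $E_1\cup\{\{y,y'\}\},E_2,\ldots,E_z$) only makes explicit what the paper leaves implicit.
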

\begin{proof}
If $y \neq y'$ and $\{y,y'\}\notin E(Q)$, suppose for contradiction that $S_{y,y'}$ belongs to $\mathcal{F}$.
Every node of $P$ has degree $2$ in $S_{y,y'}$. Since $\mathcal{F}$ is closed under topological minors (Condition (a)), smoothing a degree-$2$ node preserves the membership in $\mathcal{F}$.
Hence we may smooth all nodes of $P$ in $S_{y, y'}$ while
preserving that $S_{y, y'} \in \mathcal{F}$, thereby replacing 
the union of path $P$ and $\{\{x,y\},\{x',y'\}\}$ in $S_{y, y'}$
by the single edge $\{y,y'\}$. It follows that
$Q\cup\{\{y,y'\}\}$ has $\mathcal{F}$-thickness $z$. Since $\{y,y'\}\notin E(Q)$, this adds an edge to $Q$, contradicting the
assumption that $Q$ is edge-maximal with $\mathcal{F}$-thickness $z$. Therefore, $S_{y,y'}$ does not belong to $\mathcal{F}$.
\end{proof}

We initialize $G'$ by taking copies of $G$ and $H$ with disjoint node sets, and then identifying $G'$ with their union. Then, for each edge $e\coloneqq\{u,v\}\in E(G)$, we add the edges $\{u,w_{\varphi(e)}\}$ and $\{v,w_{\varphi(e)}\}$ to $G'$. Note that $G'$ remains a simple graph: since any two edges incident to the same node in $G$ receive distinct labels under $\varphi$, no parallel edges are created.

We are ready to prove the key lemma. 

\begin{lemma}\label{lem:general}
$G$ admits a $k$-edge-coloring if and only if $G'$ has $\theta_{\mathcal{F}}(G') \le k$.
\end{lemma}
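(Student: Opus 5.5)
The plan mirrors the proof of \cref{lem:kernel}, replacing outerplanarity arguments by Conditions (a)--(c) and \cref{obs:extrapath-general}.

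\textbf{Forward direction.} Suppose $G$ admits a proper $k$-edge-coloring with color classes $M_1,\dots,M_k$. Fix a partition $H_1,\dots,H_k$ of $E(H)$ with each $H_i\in\mathcal{F}$, which exists since $\theta_{\mathcal{F}}(H)=k$. For each $i\in[k]$ put into part $i$ the edges of $H_i$, together with, for every $e=\{u,v\}\in M_i$, the triangle on $\{u,v,w_{\varphi(e)}\}$.
\begin{itemize}
\item Since $M_i$ is a matching, these triangles are node-disjoint outside $H$.
\item Each triangle meets $(V(H),H_i)$ in exactly one node, namely $w_{\varphi(e)}$.
\end{itemize}
Hence part $i$ is obtained from $(V(H),H_i)$ by repeatedly taking $1$-sums with copies of $C_3\in\mathcal{F}$ (Condition (c)), identifying a triangle node with some $w_\ell$, and then adding isolated nodes.

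The isolated nodes need care, since the parts must be spanning subgraphs $(V(G'),E_i)$. I would get them from monotonicity: $C_3\in\mathcal{F}$ and (a) give $K_1\in\mathcal{F}$, and a $1$-sum with $K_2$ followed by deleting the edge adds an isolated node. Then Condition (b) gives every part in $\mathcal{F}$, so $\theta_{\mathcal{F}}(G')\le k$.

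\textbf{Reverse direction.} Assume $G$ has no proper $k$-edge-coloring, yet $E(G')=E_1\cup\dots\cup E_k$ with each part in $\mathcal{F}$. I would prove three claims in turn.

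\begin{itemize}
\item \textbf{Claim A.} Some $E_i$ contains a path $(a,b,c)$ of length $2$ in $G$. Otherwise the restrictions $E_i\cap E(G)$ are all matchings and form a $k$-edge-coloring.

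\item \textbf{Claim B.} For each $v\in V(G)$, the $k$ edges from $v$ to its $H$-neighbours go to pairwise distinct parts. These neighbours are distinct because $\varphi$ is injective on the edges at $v$, and they are pairwise nonadjacent because they lie in the independent set. If two of them, to $y\ne y'$, were in the same $E_i$, apply \cref{obs:extrapath-general} with $Q=H$, partition $E(H)\cap E_1,\dots,E(H)\cap E_k$ reindexed so that $E(H)\cap E_i$ is first, and $P=(v)$. This shows the subgraph $(E(H)\cap E_i)\cup\{\{v,y\},\{v,y'\}\}$ of $E_i$ is not in $\mathcal{F}$, contradicting monotonicity.

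\item \textbf{Claim C.} For each $e=\{x,y\}\in E(G)$ lying in $E_i$, both $\{x,w_{\varphi(e)}\}$ and $\{y,w_{\varphi(e)}\}$ lie in $E_i$. Since $x$ has exactly $k$ edges into $H$, Claim B puts exactly one of them, $\{x,w\}$, into $E_i$; likewise $\{y,w'\}\in E_i$. If $w\ne w'$, apply the observation with $P=(x,y)$, which gives a contradiction. Here $w$ and $w'$ are nonadjacent as members of the independent set, and they are distinct nodes of $H$. So $w=w'$, which must be the unique common $H$-neighbour of $x$ and $y$, namely $w_{\varphi(e)}$. This uses $W_x\cap W_y=\{w_{\varphi(e)}\}$, which follows from $\varphi$ being injective on paths of length at most $3$.
\end{itemize}

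Finally, apply Claim C to $\{a,b\}$ and $\{b,c\}$ from Claim A. It puts both $\{b,w_{\varphi(ab)}\}$ and $\{b,w_{\varphi(bc)}\}$ into the same $E_i$. These are distinct edges because their labels differ, so this contradicts Claim B.

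\textbf{Main obstacle.} The delicate step is applying \cref{obs:extrapath-general} correctly. It is stated for $E_1$ of a partition of an edge-maximal $Q$, so I must check two things.
\begin{itemize}
\item $E_i\cap E(H)$ together with the other restrictions is a valid $\mathcal{F}$-partition of $E(H)$ into $k$ parts, which is monotonicity again.
\item The resulting graph is a subgraph of $(V(G'),E_i)$.
\end{itemize}
I must also confirm the counting in Claim C uses regularity, namely that each $x$ has exactly $k$ edges to $H$, one per part.
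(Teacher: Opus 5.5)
Your proposal is correct and follows the paper's proof essentially step for step. It uses the same forward construction, attaching each triangle to $H_i$ by a $1$-sum at $w_\ell$. It then proves the same three claims in the reverse direction: a path of length $2$ inside one part, distinct parts at each node via \cref{obs:extrapath-general} with $P=(v)$, and triangle-consistency via $P=(x,y)$, ending in the same contradiction at the middle node $b$. Your additional points are valid refinements of details the paper leaves implicit: adding isolated nodes by a $1$-sum with $K_2$ followed by deleting the edge, restricting and reindexing the partition to $E(H)$ before applying the observation, and using $k$-regularity to get exactly one $H$-edge per part at $x$ and at $y$.
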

\begin{proof}
If $G$ admits a $k$-edge-coloring, then $E(G)$ can be partitioned into $k$ matchings $M_i$ for $i \in [k]$. Fix $i^*\in[k]$ and $\ell^* \in\mathcal{L}$. The edges $e\coloneqq\{u,v\}\in M_{i^*}$ with $\varphi(e)=\ell^*$ give rise to a collection $T_{i^*, \ell^*}$ of node-disjoint triangles, each on nodes $\{u,v,w_{\ell^*}\}$, where the only node shared among these triangles is the common node $w_{\ell^*}$. Since $H$ is edge-maximal with $\mathcal{F}$-thickenss $k$, $H$ has an edge partition into $k$ subgraphs in $\mathcal{F}$, denoted by $H_i$ for $i \in [k]$. Since $\mathcal{F}$ is closed under $1$-sums (Condition (b)) and $\mathcal{F}$ contains $C_3$ (Condition (c)), for each $i \in [k]$, the subgraph 
\[
H_i \cup \bigcup_{\ell \in \mathcal{L}} T_{i, \ell}
\]
is contained in $\mathcal{F}$. To see why, for every $\ell\in\mathcal{L}$, we attach each triangle in $T_{i,\ell}$ to $H_i$ by
taking a $1$-sum that identifies the node $w_\ell$. Repeating this $1$-sum operation
for all triangles in $T_{i,\ell}$, and then for all $T_{i, \ell}$ with $\ell\in\mathcal{L}$, yields exactly
$H_i \cup \bigcup_{\ell \in \mathcal{L}} T_{i,\ell}$, which therefore lies in $\mathcal{F}$. This argument holds for every $i \in [k]$, so $G'$ has $\mathcal{F}$-thickness at most $k$.

\medskip

Otherwise, $G$ does not admit a $k$-edge-coloring. Suppose $E(G')$ can be partitioned into $k$ subgraphs in $\mathcal{F}$, denoted by $E_i$ for $i \in [k]$, the following claims hold.
\begin{claim}\label{clm:p3-general}
There exists some $i \in [k]$, some path of length $2$ in $G$ is entirely within a subgraph $E_i$ (in $\mathcal{F}$) for some $i\in[k]$.
\end{claim}
\begin{claimproof}
Now $E(G')$ is partitioned into $k$ subgraphs $E_i$ in $\mathcal{F}$ for $i \in [k]$. 
Restricting this partition to $E(G) \subseteq E(G')$, the induced parts $E(G)\cap E_i$
for $i\in[k]$ cannot all be matchings; otherwise, they would yield a $k$-edge-coloring of $G$.
Therefore, for some $i\in[k]$, the subgraph $E_i$ contains a path of length $2$ in $G$.
\end{claimproof}

\begin{claim}\label{clm:distinct-general}
For each node $v \in V(G)$, let $w_{v, j}$ for $j \in [k]$ be neighbor nodes of $v$ in $H$. Then, the edges $\{v, w_{v, j}\}$ for $j \in [k]$ have to be assigned to pairwise distinct subgraphs $E_i$ (in $\mathcal{F}$) for $i \in [k]$.
\end{claim}
\begin{claimproof}
The nodes $w_{v,j}$ for $j\in[k]$ are pairwise distinct, since the $k$ edges in $G$ incident to $v$
receive distinct labels in $\mathcal{L}$ by the construction of $\varphi$. Recall that $w_1,w_2,\ldots,w_{\alpha}$ (and hence $w_{v,j}$
for $j\in[k]$) form an independent set.

Suppose that, for some $i \in [k]$, the subgraph $E_i$ contains at least two edges from
$\{\{v,w_{v,j}\} : j \in [k]\}$, say without loss of generality $\{v,w_{v,1}\}$ and $\{v,w_{v,2}\}$.
Let
\[
E' \coloneqq (E(H)\cap E_i)\cup\{\{v,w_{v,j}\}: j\in[2]\}.
\]

By \cref{obs:extrapath-general}, the graph $(V(H)\cup\{v\},E')$ does not belong to $\mathcal{F}$: indeed, apply the
observation with
\[
(Q,E_1,P,y,y')_{\mathrm{\cref{obs:extrapath-general}}}=(H,E_i,(v),w_{v,1},w_{v,2})_{\mathrm{here}}.
\]
On the other hand, $E'$ is a subgraph of $E_i$, which lies in $\mathcal{F}$. This contradicts
Condition~(a), since $\mathcal{F}$ is closed under topological minors (and hence under taking subgraphs).

Consequently, the $k$ edges $\{v,w_{v,j}\}$ for $j \in [k]$ must be assigned to $k$ distinct subgraphs $E_i$ for $i \in [k]$.
\end{claimproof}

\begin{claim}\label{clm:c3-general}
For each edge $e \coloneqq \{x, y\} \in E(G)$, the three edges $\{x, w_{\varphi(e)}\}$, $\{y, w_{\varphi(e)}\}$, $\{x, y\}$ have to be assigned to the same subgraph $E_i$ (in $\mathcal{F}$) for some $i \in [k]$.
\end{claim}
\begin{claimproof}
Since $E_i$ for $i \in [k]$ form a partition of $E(G)$, we may assume w.l.o.g.\ that $\{x,y\}\in E_1$.
Let
\[
W_x \coloneqq \{w_{x,1},\ldots , w_{x,k-1},\, w_{\varphi(e)}\}
\text{ and }
W_y \coloneqq \{w_{y,1},\ldots, w_{y,k-1},\, w_{\varphi(e)}\}
\]
denote the sets of nodes in $H$ that are adjacent to $x$ and $y$ in $G'$, respectively.
By the construction of $\varphi$, each of $W_x$ and $W_y$ consists of $k$ distinct nodes,
and $W_x\cap W_y=\{w_{\varphi(e)}\}$.
By \cref{clm:distinct-general}, there exist edges $\{x,w\}$ with $w\in W_x$ and $\{y,w'\}$ with $w'\in W_y$
that both lie in $E_1$.

If $w\neq w'$, then by \cref{obs:extrapath-general} the subgraph induced by the union of the path $P \coloneqq (w, x, y, w')$  and $E_1 \cap E(H)$ does not belong to $\mathcal{F}$: apply the observation with
\[
(Q,E_1,P,y,y')_{\mathrm{\cref{obs:extrapath-general}}}=(H,E_1,(x, y),w,w')_{\mathrm{here}}.
\]

Therefore $w=w'=w_{\varphi(e)}$, or equivalently, the three edges
$\{x,w_{\varphi(e)}\}$, $\{y,w_{\varphi(e)}\}$, and $\{x,y\}$ are all assigned to $E_1$.
\end{claimproof}

By~\cref{clm:p3-general}, we assume w.l.o.g. that $E_1$ contains the path of length $2$, denoted by $(a, b, c)$ with $a, b, c \in V(G)$ and $e_1 \coloneqq \{a, b\}, e_2 \coloneqq \{b, c\} \in E(G)$. By~\cref{clm:c3-general}, $E_1$ contains the four edges $\{a, w_{\varphi(e_1)}\}$, $\{b, w_{\varphi(e_1)}\}$, $\{b, w_{\varphi(e_2)}\}$, $\{c, w_{\varphi(e_2)}\}$ as well. However, by \cref{clm:distinct-general}, $\{b, w_{\varphi(e_1)}\}$ and $\{b, w_{\varphi(e_2)}\}$ have to be assigned to different $E_i$'s, a contradiction.

As a result, $E(G')$ cannot be partitioned into $k$ subgraphs $E_i$ in $\mathcal{F}$ for $i \in [k]$ or, equivalently, $G'$ has $\mathcal{F}$-thickness greater than $k$. This completes the proof.
\end{proof}

As a result, \cref{thm:main} follows.


\section{Concluding Remarks}\label{sec:con}

Finally, we remark on the case $k=2$. It is known that every planar graph has outerthickness at most $2$ due to Gon{\c{c}}alves (2005)~\cite{Goncalves05},
and hence \textsc{Outerthickness}$(G,2)$ for planar graphs contains only Yes-instances. For general graphs, however, the complexity of
\textsc{Outerthickness}$(G,2)$ remains, to the best of our knowledge, an intriguing open problem. More broadly, in the
context of our hardness framework in \cref{thm:main}, we are not aware of any proper graph class $\mathcal{F}$ satisfying
Conditions~(a)--(c) for which the corresponding problem $P_{\mathcal{F}}$ with $k=2$ admits a polynomial-time algorithm.

It is also worth noting that Conditions~(a)--(c) in \cref{thm:main} are necessary, in the sense that relaxing any of them allows for graph classes $\mathcal{F}$ where $P_{\mathcal{F}}$ is solvable in polynomial time due to Edmonds \cite{Edmonds65}. For instance, if $\mathcal{F}$ is the class of forests, Condition~(c) fails, and the problem is in P. Similarly, if $\mathcal{F}$ is the class of pseudoforests, Condition~(b) fails, and $P_{\mathcal{F}}$ also is polynomial-time solvable. Finally, if $\mathcal{F}$ is the class of graphs of bounded arboricity, Condition~(a) is not satisfied\footnote{For example, the 1-subdivision of $K_5$ has arboricity at most 2, while its topological minor $K_5$ has arboricity 3.}, and the corresponding partitioning problem is also in P.
\bibliography{ref}

\end{document}